\documentclass[aip,	jmp,reprint,amsmath,amssymb,citeautoscript,eqsecnum,longbibliography]{revtex4-2}

\usepackage[utf8]{inputenc}
\usepackage[T1]{fontenc}
\usepackage[english]{babel}
\usepackage{orcidlink}
\usepackage{mathtools}
\usepackage{amsthm}
\usepackage{braket}
\usepackage{bm}
\usepackage{graphicx}
\usepackage[caption=false]{subfig}
\usepackage{hyperref}

\newcommand{\ri}{\mathrm{i}}
\newcommand{\I}{\openone}

\newcommand{\n}{\mathbf{n}}

\newcommand{\Hil}{\mathcal H}
\newcommand{\B}{\mathcal B}
\newcommand{\Tr}{\operatorname{Tr}}

\newcommand{\dd}{\mathrm d}
\newcommand{\sphere}{\mathcal S^2}

\theoremstyle{plain}
\newtheorem{theorem}{Theorem}
\newtheorem{proposition}[theorem]{Proposition}

\theoremstyle{definition}
\newtheorem{definition}[theorem]{Definition}

\theoremstyle{remark}
\newtheorem{remark}[theorem]{Remark}

\begin{document}

\title{Spin-coherent quantum designs}

\author{Marcin~Rudzi\'nski\orcidlink{0000-0002-6638-3978}}
\email{m.rudzinski@doctoral.uj.edu.pl}
\affiliation{Faculty of Physics, Astronomy and Applied Computer Science, Jagiellonian University, 30-348 Krak{\'o}w, Poland}
\affiliation{Doctoral School of Exact and Natural Sciences, Jagiellonian University, 30-348 Krak{\'o}w, Poland}

\author{Aaron~Z.~Goldberg\orcidlink{0000-0002-3301-7672}}
\affiliation{National Research Council of Canada,  Ottawa, Ontario K1N 5A2, Canada}

\author{Andrei~B.~Klimov\orcidlink{0000-0001-8493-721X}}
\affiliation{Departamento de F\'{\i}sica, Universidad de Guadalajara, 44420~Guadalajara,  Jalisco, Mexico}

\author{Luis~L.~S\'{a}nchez-Soto\orcidlink{0000-0002-7441-8632}}
\affiliation{Departamento de \'Optica, 	Facultad de F\'{\i}sica, Universidad Complutense, 	28040~Madrid, Spain}
\affiliation{Max-Planck-Institut f\"ur die Physik des Lichts,  91058~Erlangen, Germany}
\affiliation{Institute for Quantum Studies, Chapman University, Orange, CA 92866, USA}

\author{Karol~\.Zyczkowski\orcidlink{0000-0002-0653-3639}}
\affiliation{Faculty of Physics, Astronomy and Applied Computer Science, Jagiellonian University, 30-348 Krak{\'o}w, Poland}
\affiliation{Center for Theoretical Physics, Polish Academy of Sciences, 02-668 Warszawa, Poland}

\begin{abstract}
	Coherent states bridge the gap between quantum and classical physics, but their overcomplete and nonorthogonal nature makes it difficult to identify the minimal discrete set needed to reconstruct  quantum information. 
	Finite spin-coherent tomography and discrete coherent-state operator bases are
	known, but here we address the more specific rank-resolved problem of preserving
	the canonical contravariant-symbol representation.
	We show that the canonical finite coherent-state formula reconstructs every
	operator in the rank-\(S\) sector exactly if and only if the sampling points form
	a spherical \((2J+S)\)-design. We call the associated
	configurations spin-coherent quantum designs.
	We further give a fully explicit positive-weight Gauss--Legendre
	construction that avoids the need for an equal-weight spherical design.
	Together, these results establish a unified framework for reading out physical observables from a handful of measurement samples, playing for spin systems the role that the so-called von Neumann lattice plays for canonical coherent states. Finally, we derive practical protocols for estimating moments of spin operators from these constructions, with direct applications to polarimetry, magnetometry, and quantum state tomography.
	\end{abstract}

\maketitle
	
\section{Introduction}
	\label{sec:introduction}

Coherent states occupy a privileged place in physics: they track classical dynamics remarkably closely, which makes them a natural language of simulation and measurement. Yet this closeness to classicality comes at a price: they are not orthogonal\cite{Perelomov:1986aa,Gazeau:2009aa,Robert:2021aa,Kam:2023aa}. This raises the question: for a given task, which coherent states does one actually need to know?

Consider completeness relations, one of the defining properties of any family of coherent states\cite{Klauder:1985aa}. Standard quantum theory dictates  that an infinite continuum of them must be summed to resolve the identity; how excessive is that? How many coherent states are truly needed to represent a given observable, and how does that number grow as the operator complexity increases?
 Because coherent states are nonorthogonal, their phase-space representations\cite{Schroeck:1996fv,Schleich:2001hc,QMPS:2005mi}
 generally involve nonorthogonal operator frames and dual symbols rather than
 orthogonal projective measurements. Nevertheless, suitably weighted
 coherent-state projectors can form a positive operator-valued measure.
 This raises the same question in the setting of quantum tomography: how many coherent states must a state be compared against to fully determine its informational content, or, equivalently, which values of a quasiprobability distribution suffice to represent it? A closely related line of work asks how tightly an operator can be constrained once it is measured on only a restricted set of coherent states\cite{Janssen:1981aa}.

Framed more broadly, representing a continuous function through a finite set of discrete evaluations is the subject of cubature\cite{Ueberhuber:1997aa} and design theory\cite{Delsarte:1977aa,Goethals:1981aa,Colbourn:2010aa}. Lifted to the quantum setting, this idea becomes coherent quantum designs that mimic operators by discrete sets of coherent states. 

These questions have all been addressed in the realm of canonical coherent states, where the phase space is the two-dimensional plane and the symmetry algebra is Heisenberg-Weyl.  There, the identity can be resolved by coherent states placed on an infinite lattice whose unit cell has area $h$\cite{Neumann:1932aa,Perelomov:1971aa,Bargmann:1971aa,Bacry:1975aa,Boon:1983aa}--the very same lattice that is ideal for  the Gottesman-Kitaev-Preskill encoding of qubits in oscillators\cite{Gottesman:2001aa} and its numerous applications\cite{Menicucci:2014aa,Duivenvoorden:2017aa,Fukui:2021aa,Conrad:2022aa}. Bosonic tomography can thus be restricted to sampling this lattice, reducing the overhead of heterodyne detection. The lesson is a subtle one: representing an arbitrary state or operator on the infinite plane still demands infinitely many coherent states, but not, crucially, a continuum of them.

Here we develop a rank-resolved framework for SU(2). The spin
coherent states \(|\n\rangle\) (also known as atomic or Bloch coherent
states)\cite{Atkins:1971aa,Radcliffe:1971aa,Arecchi:1972aa}  are widely regarded as the least quantum states of a spin system, not least because they saturate a variety of uncertainty relations\cite{Kitagawa:1993aa,Dammeier:2015aa,Shabbir:2016aa,
	Goldberg:2020aa}. Like their canonical counterparts, they satisfy a
continuous resolution of the identity. Finite reconstruction from spin
coherent states has substantial precedent: coherent-state probabilities
have been used to reconstruct pure spin states\cite{AmietWeigert1999},
mixed states and arbitrary spin operators from finitely many values of
their \(Q\) symbols\cite{AmietWeigert2000}, and discrete spin phase-space
representations\cite{Weigert1999}. 
A full discrete Moyal-type calculus was subsequently developed using
\((2J+1)^2\) sphere points and a constellation-dependent dual operator
basis.\cite{HeissWeigert2000}
Generic families of
\((2J+1)^2\) coherent-state projectors can also form operator
bases\cite{Weigert2004}, while finite coherent-state measurements based
on spherical quadratures, including a Gauss--Legendre product
construction, were developed in Ref.~\onlinecite{IblisdirRoland2006}.

Our question is more specific: when can the continuous coherent-state
representation be replaced by a finite sum, using the canonical
contravariant symbols, uniformly over a complete irreducible tensor
sector? We show that exact reconstruction of the entire rank-\(S\)
sector is equivalent to the spherical \((2J+S)\)-design condition. 
We refer to the coherent
states associated with these constructions as spin-coherent quantum
designs. 
Unlike quantum $t$-designs, which average functions over the whole of Hilbert space\cite{Ambainis:2007aa}, spin-coherent designs involve only states that are simple to prepare, making them considerably more practical.

Spin systems  have an additional degree of freedom, the spin $J$, or equivalently the dimension $2J+1$, which lets us track how the number $M$ of coherent states needed to represent an operator scales with dimension. There is no unique way of distributing $M$ points evenly over a sphere\cite{Saff:1997aa}, which is precisely what makes spherical designs so intriguing  and leads to problems such as sphere packing\cite{Conway:1996ys,Viazovska:2017aa}.
Consequently, unlike the planar case, the resulting lattice on the sphere is not built from cells of area exactly $h$; nonetheless, we find that the average cell area still decreases as $J^{-2}$, and hence plausibly as $h^2$. 
For observables supported on tensor ranks \(0\le S\le R\le2J\), the required
design degree is \(2J+R\).

We put this machinery to practical use by estimating moments of spin operators, a standard task in physical applications, including  polarimetry\cite{Berry:1977aa,Azzam:1985aa,Goldberg:2020ac} and magnetometry\cite{Wasilewski:2010aa,Behbood:2013aa}. Our spin-coherent quantum designs translate directly into experimentally friendly protocols for reconstructing these moments, including in the realistic regime of finite measurement statistics. Quantum designs have already been used to simplify the analysis of mutually unbiased bases\cite{Renes:2004aa,Klappenecker:2005aa} and in tasks such as shadow tomography\cite{Cieslinski:2024aa};  so we expect that spin-coherent designs will streamline them even further (unlike the case of Gaussian states for which such designs are absent\cite{Blume:2014aa}).  It is pleasing that these algebraic and geometrical studies have practical application and we speculate that a similar analysis can be carried out in a variety of phase spaces. 
	
\section{Spin coherent states and spherical designs}
	\label{sec:coherent-designs}
	
Let $\Hil_{J} \cong \mathbb{C}^{2J+1}$ be the carrier space of the irreducible spin-\(J\) representation of  SU(2), of dimension $d = 2J+1$,  and let $J_x, J_y, J_z$ be the angular-momentum operators satisfying (with \(\hbar=1\) throughout)
	\begin{equation}
		[J_x,J_y] = \ri   J_z  \, ,
	\end{equation}
and cyclic permutations.   The space \(\mathcal H_J \)  is spanned by the standard basis $\{ |J m\rangle \mid m= -J, \ldots, J\}$  of simultaneous eigenstates  of $\mathbf{J}^{2}$ and ${J}_{z}$. 

 Spin coherent states are obtained by rotating the highest-weight state 
 \(|JJ\rangle\)\cite{Perelomov:1986aa}.  For a unit vector \(\n\in\sphere\), let \(R_{\n} \in SO(3)\) be the rotation sending the north pole \(\mathbf e_z\) to \(\n\).  The spin coherent state pointing along  \(\n\) is 
	\begin{equation}
		|\n\rangle_J	= 	U_J (R_{\n}) |J J \rangle \, ,
	\end{equation}
where $U_J$ is the irreducible spin-$J$ representation. 

The state vector depends on the phase convention used to choose \(R_{\n}\), but the rank-one projector
	\begin{equation}
		P^{(J)}(\n) =	|\n\rangle \langle \n|_J
	\end{equation}
is unambiguous and  transforms covariantly
	\begin{equation}
		U_J (R) \, P(\n) \, U_J(R)^\dagger = P(R\n), \qquad R \in SO(3) \, ,
	\end{equation}
 where we drop the superscript \(J \) once the spin is fixed.
 
 Geometrically, the coherent-state orbit SU(2)/U(1)$\simeq \sphere$ is the classical phase space of a spin, the finite-dimensional counterpart of the phase plane for canonical coherent states. These states are overcomplete and resolve the identity
	\begin{equation}
		\I_{2J+1} =
		\frac{2J+1}{4\pi} \int_{\sphere} P(\n)\, \dd \Omega \, ,
		\label{eq:continuous-coherent-identity}
	\end{equation}
where $d\Omega$ is the invariant measure on $\sphere$. The right-hand side commutes with every \(U_J (R)\) by covariance, so by irreducibility and Schur's lemma it must be proportional to the identity; the normalization follows by taking the trace.

The main question of this paper is when the continuous integral \eqref{eq:continuous-coherent-identity}, and
more generally coherent-state integral representations of spin operators, can be replaced by finite sums. The relevant finite point sets are spherical designs, introduced by Delsarte, Goethals, and Seidel\cite{Delsarte:1977aa} (see also the surveys \onlinecite{Bannai:2009aa,Womersley:2018aa}); their existence at every design strength follows from the averaging-set theorem\cite{Seymour:1984aa}, and the best known and near-optimal designs on \(\sphere\) have been tabulated numerically by Hardin and Sloane\cite{Hardin:1996aa,SloaneRepository}.

\begin{definition}
	A finite set $\{\n_k\}_{k=1}^M\subset\sphere$ is a spherical $t$-design if it reproduces the uniform average of every spherical polynomial \(f\) of degree at most \(t\)
		\begin{equation}
			\frac1M\sum_{k=1}^M f(\n_k)=\frac1{4\pi}\int_{\sphere}f(\n)\,\dd\Omega \, .
			\label{eq:spherical-design-definition}
		\end{equation}
 Equivalently,
		\begin{equation}
			\sum_{k=1}^M Y_{Kq }(\n_k)=0,
			\qquad 1\le K \le t,
			\qquad -K \le q \le K \, .
			\label{eq:design-harmonic-condition}
		\end{equation}
	\end{definition}
	
The equivalence between \eqref{eq:spherical-design-definition} and \eqref{eq:design-harmonic-condition} follows from the decomposition of spherical polynomials into spherical harmonics\cite{Sloan:2009aa}.  The \(K=0\) harmonic is constant and is fixed by normalization, while exact averaging of all nonconstant harmonics up to degree \(t\) is precisely the condition that all moments up to degree \(t\) agree with the uniform sphere average.
	
Let \(M_t\) denote the minimal number of points in a spherical \(t\)-design on \(\sphere\). A counting argument gives the lower bound \(M_t=\Omega(t^2)\), and a matching upper bound, \(M_t=\Theta(t^2)\), was established in Ref.~\onlinecite{Bondarenko:2013aa} (where $\Omega$ and $\Theta$ refer to the standard Bachmann-Landau notation for growth rates). This quadratic scaling matters for spin systems: the coherent-state projector \(P(\n)\) has harmonic degree at most \(2J\), so the design strength needed for identity resolution grows linearly with \(J\), while the number of design points grows quadratically.
	
	\begin{figure}[t]
		\centering
		\includegraphics[width=0.62\linewidth]{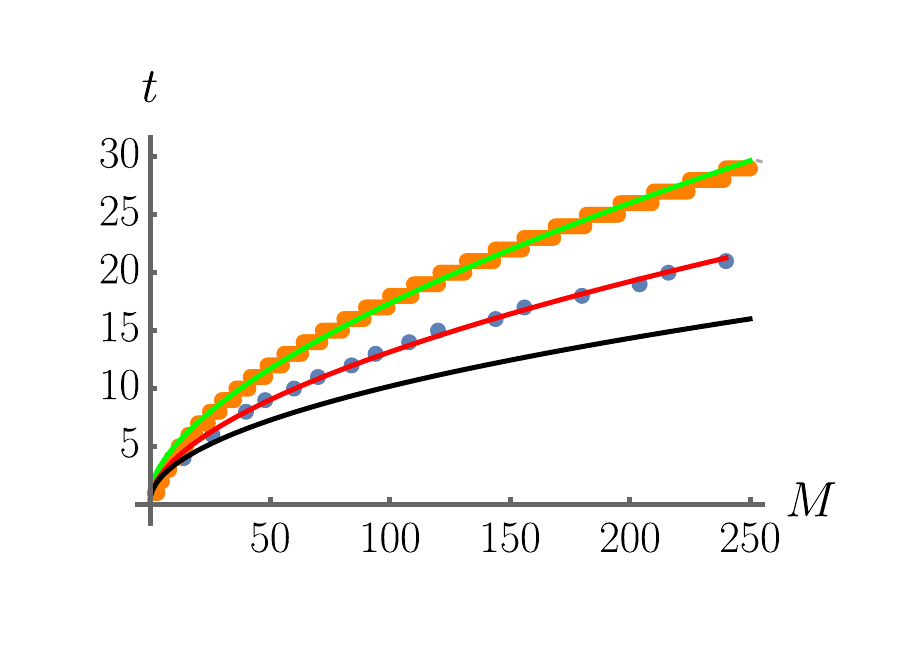}
		\caption{Scaling of known spherical designs on \(S^2\).  Blue points show the degree \(t\) of the design with the smallest known number \(M\) of points\cite{SloaneRepository}.  Orange points show the lower bound \cite{Delsarte:1977aa}, plotted as a constraint on the minimal number of points \(M_t\).  The red and green 	curves are empirical fits \(t\propto M^{\alpha}\) to the tabulated data and to the lower-bound points. \(\alpha\approx 0.56\) and \(\alpha\approx 0.61\), respectively.  The black curve shows the square-root approximation, \(t\propto\sqrt{M}\), illustrating the asymptotic scaling \(M_t=\Theta(t^2)\). }
		\label{fig:spherical-design-scaling}
	\end{figure}

\begin{definition}
		Given a spherical \(t\)-design \(X=\{\n_k\}_{k=1}^{M}\subset \sphere\) and a fixed spin \(J\), the associated spin-coherent design is the image of \(X\) under the coherent-state embedding \(\n\mapsto P(\n)\),
		\begin{equation}
        \mathcal{P}_J (X) = \left \{ 	P(\n_k)= |\n_k \rangle  \langle \n_k| \right\}_{k=1}^{M} 	\subset \mathcal{B}(\Hil_J) \, ,
		\end{equation}
		where $\mathcal{B}(\Hil_J)  $  denotes the algebra of bounded linear operators acting on $\Hil_J$.
	\end{definition}
	
\begin{remark}
A spin-coherent design should not be confused with a complex projective \(t\)-design: it is a spherical \(t\)-design transported to the coherent-state orbit SU(2)/U(1)$\simeq \sphere \subset \mathbb{C}\mathbf{P}^{2J}$ and the design property is imposed only along this lower-dimensional orbit. The two notions coincide only for \(J=1/2\), where the orbit fills the entire projective space  $\mathbb{C}\mathbf{P}^{1}$.
\end{remark}
	
Under the adjoint action of the SU(2) group, the operator space of the spin-$J$ representation decomposes as
	\begin{equation}
		\B(\Hil_J)=\bigoplus_{K=0}^{2J}\mathcal V_K,
		\qquad
		\mathcal{V}_K=\operatorname{span}\{T^{(J)}_{Kq} \mid q=-K,\ldots,K\} \, ,
		\label{eq:operator-sector-decomposition}
	\end{equation}
where $T^{(J)}_{Kq}$ are irreducible tensor operators, defined as\cite{Fano:1959ly,Blum:1981rb}
\begin{equation}
	T^{(J)}_{Kq} = \sqrt{\frac{2K+1}{2J+1}}
	\sum_{m,m'=-J}^{J}
	C^{\,J m'}_{Jm,\,Kq} 	|Jm'\rangle\langle Jm| ,
	\label{eq:app-TLM-definition}
\end{equation}
with \(C^{\,J m'}_{Jm,\,K q}\equiv \langle J,m;K,q|J,m'\rangle\) being a Clebsch--Gordan coefficient\cite{Varshalovich:1988aa} that couples a spin $J$ and a spin $K$ ($0 \leq K \le 2J$) to a total spin $J$.  The coefficient vanishes unless \(q = m' -m\). These tensors form an orthonormal basis and have the right transformation properties under SU(2).

In this basis, the coherent-state projector admits the multipole expansion
	\begin{equation}
		P(\n) = \sum_{K=0}^{2J} \sum_{q=-K}^{K}
		a_{JK} \, Y_{Kq}^\ast (\n)T^{(J)}_{Kq} \, ,
		\label{eq:projector-expansion}
	\end{equation}
with
	\begin{equation}
		a_{JK} =	\sqrt{\frac{4\pi}{2J+1}}\,	C^{\,J J}_{JJ,\,K 0} = (2J)!\sqrt{\frac{4\pi}{(2J-K)!(2J+K+1)!}} \, .
		\label{eq:ajL-main}
	\end{equation}
This multipole structure links spin-coherent designs directly to coherent-state identity resolution.
\begin{proposition}
	\label{prop:identity-decomposition}
	For a set \(X=\{\n_k\}_{k=1}^{M}\subset\sphere\),
	\begin{equation}
		\frac{2J+1}{M} \sum_{k=1}^{M}P(\n_k)
		= \openone_{2J+1}
		\label{eq:sc-identity-resolution}
	\end{equation}
holds if and only if \(X\) is an equal-weight spherical \(2J\)-design. 
	
Equivalently, the elements 
\begin{equation} 
	\label{eq:POVM}
	E_k=\frac{2J+1}{M} P(\n_k) 
\end{equation} 
form an equal-weight spin-coherent positive operator-valued measure (POVM)\cite{Peres:2002oz} if and only if \(X\) is a spherical \(2J\)-design.
\end{proposition}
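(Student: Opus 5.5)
The plan is to substitute the multipole expansion \eqref{eq:projector-expansion} into the left-hand side of \eqref{eq:sc-identity-resolution} and compare coefficients in the orthonormal tensor basis \eqref{eq:app-TLM-definition}. Summing over the sample points gives
\begin{equation}
\frac{2J+1}{M}\sum_{k=1}^{M}P(\n_k)=\frac{2J+1}{M}\sum_{K=0}^{2J}\sum_{q=-K}^{K}a_{JK}\Bigl(\sum_{k=1}^{M}Y^\ast_{Kq}(\n_k)\Bigr)T^{(J)}_{Kq}\, .
\end{equation}
Since the \(T^{(J)}_{Kq}\) are orthonormal with respect to the Hilbert--Schmidt inner product, they are linearly independent, and \eqref{eq:sc-identity-resolution} is equivalent to matching each coefficient separately.

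First I would handle the \(K=0\) term. From \eqref{eq:app-TLM-definition}, \(T^{(J)}_{00}=\I_{2J+1}/\sqrt{2J+1}\), and \(Y_{00}=1/\sqrt{4\pi}\). From \eqref{eq:ajL-main}, \(a_{J0}=\sqrt{4\pi/(2J+1)}\). The \(K=0\) contribution is therefore
\begin{equation}
\frac{2J+1}{M}\cdot M\cdot\sqrt{\frac{4\pi}{2J+1}}\cdot\frac{1}{\sqrt{4\pi}}\cdot\frac{\I_{2J+1}}{\sqrt{2J+1}}=\I_{2J+1}
\end{equation}
for any point set. This is the same normalization that fixes \eqref{eq:continuous-coherent-identity}, and it can also be confirmed by taking the trace of both sides.

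Next, the identity holds if and only if every coefficient with \(1\le K\le 2J\) vanishes, that is, \(a_{JK}\sum_k Y^\ast_{Kq}(\n_k)=0\). The step that needs care is checking that \(a_{JK}\neq0\) for all \(0\le K\le 2J\). This follows from the closed form in \eqref{eq:ajL-main}: every factorial there is finite and positive in this range. Equivalently, the stretched Clebsch--Gordan coefficient \(C^{JJ}_{JJ,K0}\) never vanishes. We may then divide by \(a_{JK}\) and complex conjugate, which shows the condition is \(\sum_k Y_{Kq}(\n_k)=0\) for \(1\le K\le 2J\) and all \(q\). By \eqref{eq:design-harmonic-condition}, this is exactly the statement that \(X\) is a spherical \(2J\)-design, and both directions of the equivalence follow.

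Finally, the POVM reformulation follows directly. Each \(E_k\) in \eqref{eq:POVM} is a nonnegative multiple of a rank-one projector, so it is positive semidefinite for any point set. The POVM normalization \(\sum_k E_k=\I_{2J+1}\) is literally \eqref{eq:sc-identity-resolution}, so the POVM condition holds precisely when \(X\) is a spherical \(2J\)-design.
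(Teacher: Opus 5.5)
Your proposal is correct and follows the paper's proof. You substitute the multipole expansion of \(P(\n)\), note that the \(K=0\) term gives \(\I_{2J+1}\) for any point set, and use linear independence of the \(T^{(J)}_{Kq}\) to reduce the identity to \(\sum_k Y_{Kq}(\n_k)=0\) for \(1\le K\le 2J\). Your explicit checks that \(a_{JK}\neq 0\) for \(0\le K\le 2J\) and that each \(E_k\) is automatically positive semidefinite fill in steps the paper leaves implicit.
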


\begin{proof}
	Substituting the multipole expansion \eqref{eq:projector-expansion} into the left-hand side of \eqref{eq:sc-identity-resolution}, the \(K=0\)  contribution is exactly \(\I_{2J+1}\). Because the \(T^{(J)}_{Kq}\) are linearly independent, Eq.~\eqref{eq:sc-identity-resolution} then holds if and only if
	\begin{equation}
	 \sum_{k=1}^{M}Y_{Kq}(\n_k)=0,
		\qquad
		1\le K\le 2J,
		\quad -K\le q\le K \, ,
	\end{equation}
	which is exactly the harmonic condition \eqref{eq:design-harmonic-condition} defining an equal-weight spherical \(2J\)-design.
\end{proof}

The construction extends naturally to unequal weights: for positive normalized weights \(\lambda_k>0\) with \(\sum_{k=1}^{M} \lambda_k=1\),
\begin{equation}
	(2J+1) \sum_{k=1}^{M}  \lambda_k \, P(\n_k)=\I_{2J+1}
\end{equation}
holds if and only if
\begin{equation}
	\sum_{k=1}^{M} \lambda_k Y_{Kq} (\n_k)=0,
\end{equation}
so weighted coherent-state identity resolutions correspond precisely to weighted spherical \(2J\)-designs.

Such identity resolutions are, in general, not unique. Consider \(k\) points forming a uniformly spaced ring at polar angle \(\theta\),
\begin{equation}
\n_\ell(\phi_0) =
\big(
\sin\theta\cos(\phi_0+2\pi\ell/k),
\sin\theta\sin(\phi_0+2\pi\ell/k),
\cos\theta
\big),
\qquad
\ell=0,\ldots,k-1 \, .
\end{equation}
If \(k \ge d=2J+1\), the partial sum \(\sum_{\ell=0}^{k-1}P(\n_\ell(\phi_0))\) is independent of the starting angle \(\phi_0\). Hence any coherent-state identity resolution containing such a ring may have that ring rotated rigidly about its symmetry axis without disturbing the resolution.  Examples are shown in Fig.~\ref{fig:examples-rotations}.  
\begin{figure*}[t]
	\centering

	\subfloat[]{%
		\includegraphics[width=0.22\textwidth]
		{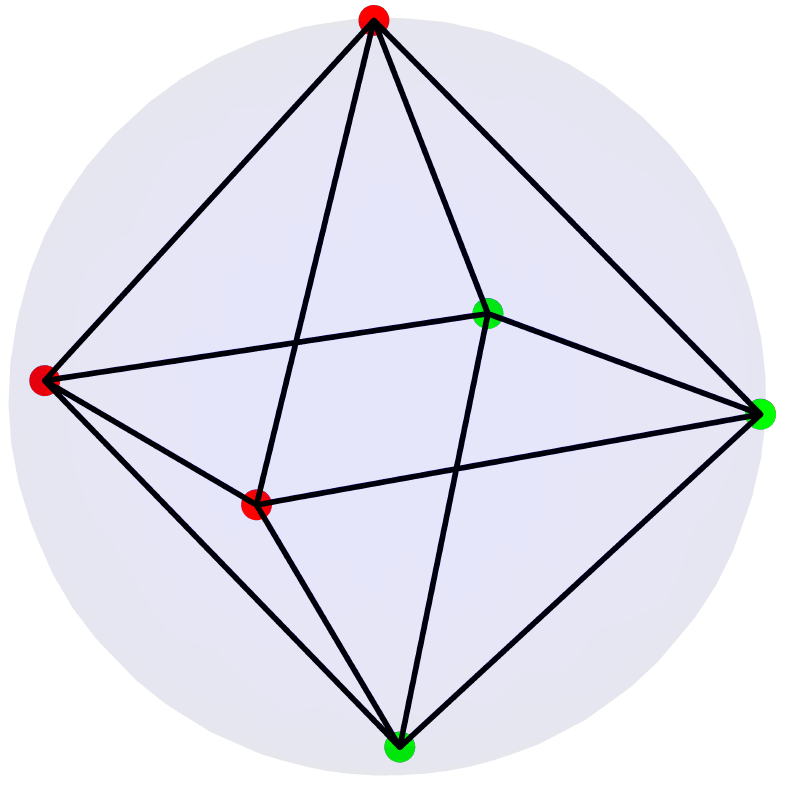}%
	}
	\hfill
	\subfloat[]{%
		\includegraphics[width=0.22\textwidth]
		{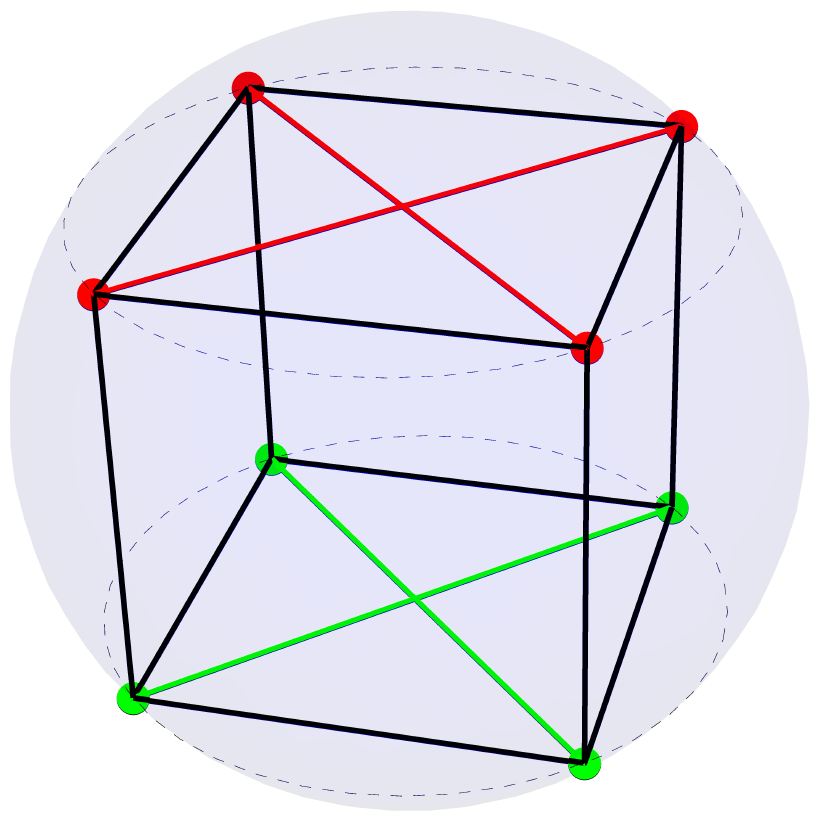}%
	}
	\hfill
	\subfloat[]{%
		\includegraphics[width=0.22\textwidth]
		{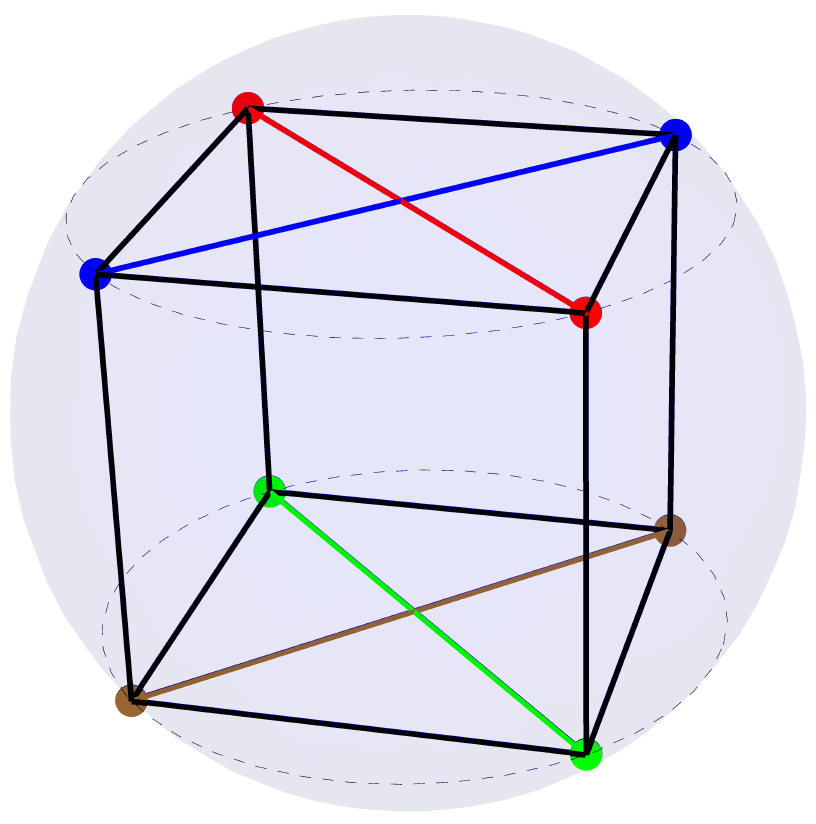}%
	}
	\hfill
	\subfloat[]{%
		\includegraphics[width=0.22\textwidth]
		{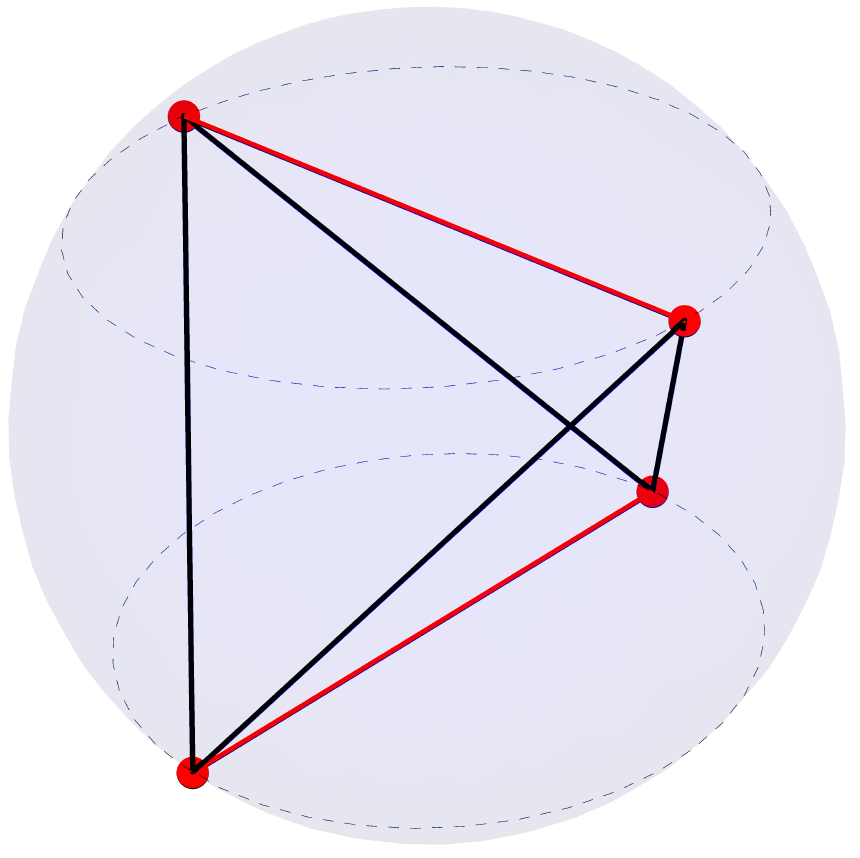}%
	}

	\caption{
     Examples of azimuthal freedom in coherent-state identity resolutions.  Identity \(\I_{d}\) resolution is invariant under a collective rotation of \(k\) equally distributed points on a circle around the circle axis, provided \(k\ge d\).  The colored points indicate rigid blocks that can be rotated without changing the identity resolution in the corresponding spin dimension. a) an octahedron--identity resolution of dimensions $d=2,3$ invariant under collective rotation of points of the same color, b) a cube--invariant decomposition for $d=2,3,4$, c) a cube--invariant decomposition for $d=2$, d) a regular tetrahedron--it gives resolution of identity for $d=2,3$ and it is a specific case of c) with doubly degenerate points. }
	\label{fig:examples-rotations}
\end{figure*}

Several familiar polyhedra furnish low-degree examples of spherical designs. The tetrahedron is a spherical \(2\)-design, the octahedron and cube are spherical \(3\)-designs, and the icosahedron and dodecahedron are spherical \(5\)-designs. By Proposition~\ref{prop:identity-decomposition}, each configuration yields an exact, equal-weight coherent-state POVM for the corresponding range of spins. The next section extends this construction beyond the identity, to successive irreducible tensor ranks.

\section{Irreducible spin tensors and the rank theorem}
	\label{sec:rank-theorem}

The decomposition \eqref{eq:operator-sector-decomposition} is standard in angular-momentum theory\cite{Varshalovich:1988aa} and arises naturally in spin phase-space  representations\cite{Stratonovich:1957aa,Berezin:1975mw,Varilly:1989aa,Dowling:1994sw,Brif:1999aa}.  Every operator $A \in \mathcal{B} (\Hil_J)$ admits the multipole expansion
\begin{equation}
	A = \sum_{K=0}^{2J}\sum_{q=-K}^{K} A_{Kq} T^{(J)}_{Kq}, \qquad A_{Kq} = \Tr [ A\;  T^{(J) \dagger}_{Kq} ]  \, ,
\end{equation}
which organizes operator space into irreducible sectors of increasing rank. This structure is physically natural: operators of lowest rank are precisely those that appear most frequently in common Hamiltonians and  measurements.

Reconstruction can therefore be studied rank by rank. For an operator supported in a single irreducible rank-$S$ sector  
\begin{equation}
		A^{(S)}=\sum_{q=-S}^{S} A_{Sq}T^{(J)}_{Sq},
		\qquad 0\le S \le 2J \, ,
	\end{equation}
one can define its contravariant symbol as
	\begin{equation}
		K_{A^{(S)}}(\n) = \frac{4\pi}{(2J+1) \, a_{JS}}
		\sum_{q=-S}^{S}A_{Sq} Y_{Sq}(\n) \, ,
		\label{eq:rank-r-upper-symbol}
	\end{equation}
	where the coefficients $a_{JS}$ have been defined in \eqref{eq:ajL-main}.
	In particular,
	\begin{equation}
		K_{T_{Sq}^{(J)}}(\n)=\frac{4\pi}{2J+1}\frac{Y_{Sq}(\n)}{a_{JS}}.
		\label{eq:tensor-K}
	\end{equation}
	Using Eq.~\eqref{eq:projector-expansion} together with spherical-harmonic
	orthogonality yields the continuous coherent-state representation 
	\begin{equation}
		A^{(S)} = \frac{2J+1}{4\pi}
		\int_{\sphere}K_{A^{(S)}}(\n)P(\n)\,\dd\Omega \, .
		\label{eq:rank-r-continuous-representation}
	\end{equation}
	The symbol in Eq.~\eqref{eq:rank-r-upper-symbol} has spherical degree $S$, while matrix elements of $P(\n)$ have degree at most $2J$.  This simple degree-counting argument is the
	heart of the following result, concerning the rank of spherical tensors.
	
	\begin{theorem}
		\label{thm:rank-r-quadrature}
		Let \(X=\{\n_k\}_{k=1}^{M}\subset\sphere\), and fix
		\(0\le S\le2J\). The equal-weight canonical finite coherent-state reconstruction
		formula
		\begin{equation}
			\label{eq:exp}
			A^{(S)}
			=
			\frac{2J+1}{M}
			\sum_{k=1}^{M}
			K_{A^{(S)}}(\n_k)P(\n_k)
		\end{equation}
		holds for every operator \(A^{(S)}\in\mathcal V_S\) if and only if
		\(X\) is a spherical \((2J+S)\)-design. In particular, any
		spherical \(t\)-design with \(t\ge2J+S\) reconstructs the
		entire rank-\(S\) sector exactly.
	\end{theorem}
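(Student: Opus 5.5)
The plan is to reduce the operator identity \eqref{eq:exp} to a family of scalar identities for products of spherical harmonics on \(X\), and then to compare that family with the harmonic condition \eqref{eq:design-harmonic-condition}.

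\textbf{Reduction to scalar identities.} By linearity it suffices to take \(A^{(S)}=T^{(J)}_{Sq'}\) for each \(q'\). Then \(K_{A^{(S)}}\) is given by \eqref{eq:tensor-K}. Substitute this and the multipole expansion \eqref{eq:projector-expansion} of \(P(\n_k)\) into the right-hand side of \eqref{eq:exp}. The right-hand side becomes
\begin{equation*}
\sum_{K=0}^{2J}\sum_{q=-K}^{K}\frac{a_{JK}}{a_{JS}}\,\frac{4\pi}{M}\sum_{k=1}^{M}Y_{Sq'}(\n_k)\,Y^\ast_{Kq}(\n_k)\,T^{(J)}_{Kq}.
\end{equation*}
The \(T^{(J)}_{Kq}\) are linearly independent, and every \(a_{JK}\neq0\) by \eqref{eq:ajL-main}. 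Hence \eqref{eq:exp} holds for all of \(\mathcal V_S\) if and only if
\begin{equation*}
\frac{1}{M}\sum_{k=1}^{M}Y_{Sq'}(\n_k)Y^\ast_{Kq}(\n_k)=\frac{1}{4\pi}\int_{\sphere}Y_{Sq'}Y^\ast_{Kq}\,\dd\Omega=\frac{\delta_{KS}\delta_{qq'}}{4\pi}
\end{equation*}
for all \(|q'|\le S\), \(0\le K\le 2J\) and \(|q|\le K\). In words, the equal-weight sum on \(X\) must integrate exactly every function in the product space \(W=\operatorname{span}\{Y_{Sq'}Y^\ast_{Kq}: K\le2J\}\).

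\textbf{Sufficiency.} Each product \(Y_{Sq'}Y^\ast_{Kq}\) is a spherical polynomial of degree at most \(S+K\le 2J+S\). A \((2J+S)\)-design therefore integrates all of \(W\) exactly by \eqref{eq:spherical-design-definition}. The same argument covers any \(t\)-design with \(t\ge 2J+S\).

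\textbf{Necessity.} This is the main step, and it amounts to showing that \(W\) contains every harmonic \(Y_{Lp}\) with \(0\le L\le 2J+S\). Once that is shown, exact integration of \(W\) gives \(\sum_k Y_{Lp}(\n_k)=0\) for \(1\le L\le 2J+S\), which is condition \eqref{eq:design-harmonic-condition}.

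To prove the spanning property, I would use the fact that the multiplication map \(\mathcal Y_S\otimes\mathcal Y_K\to C(\sphere)\) is \(SO(3)\)-equivariant, where \(\mathcal Y_L\) denotes the degree-\(L\) harmonics. The Gaunt formula expands \(Y_{Sq'}Y_{Kq}\) into terms \(Y_{Lp}\) with coefficients proportional to \(C^{L0}_{S0,K0}\,C^{Lp}_{Sq',Kq}\). The coefficient \(C^{L0}_{S0,K0}\) is nonzero whenever \(|K-S|\le L\le K+S\) and \(K+S+L\) is even. In that case the projection of the image onto \(\mathcal Y_L\) is a nonzero invariant subspace, so by irreducibility it is all of \(\mathcal Y_L\).

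The conjugates \(Y^\ast_{Kq}=(-1)^qY_{K,-q}\) span the same space \(\mathcal Y_K\), so they cause no problem. It remains to choose, for each \(L\), an admissible \(K\le 2J\):
\begin{itemize}
\item if \(L\ge S\), take \(K=L-S\le 2J\), which makes \(K+S+L=2L\) even;
\item if \(L<S\), take \(K=S-L\le S\le 2J\), which again gives an even sum.
\end{itemize}

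One subtlety remains: each such product also has components in other degrees \(L'\neq L\). I would handle this by induction on \(L\). Alternatively, I would argue equivariantly that \(W\) is an invariant subspace whose isotypic components \(\mathcal Y_L\) each appear, so that \(W\) contains every such \(\mathcal Y_L\) itself. This step is where I expect the most care to be needed.
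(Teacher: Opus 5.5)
Your proposal is correct and follows the paper's proof: the reduction via linear independence of the \(T^{(J)}_{Kq}\) to exact quadrature of the products \(Y_{Sq'}Y^\ast_{Kq}\), sufficiency by degree counting, and the choice \(K=|L-S|\) that saturates the triangle so that \(C^{L0}_{S0,K0}\neq0\) all match. The subtlety you flag is resolved in the paper by Clebsch--Gordan orthogonality, which writes \(Y_{Lm}\) explicitly as a linear combination of the \(Y_{Sq}Y^\ast_{Kq'}\) at fixed \(S,K\) and removes all other degrees; this is the concrete form of your equivariant argument, which is valid because each \(\mathcal{Y}_L\) occurs with multiplicity one in functions on \(\sphere\), so the induction on \(L\) is unnecessary (and would be awkward for \(L<S\), where the extra components have higher degree).
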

	
	\begin{proof}
		We first prove sufficiency. For arbitrary
		\(|\psi\rangle,|\phi\rangle\in\mathcal H_J\), set
		\begin{equation}
			F_{\psi\phi}(\n)
			=
			K_{A^{(S)}}(\n)
			\langle\psi|P(\n)|\phi\rangle .
		\end{equation}
		The first factor has spherical degree \(S\) and the second has
		degree at most \(2J\), so \(F_{\psi\phi}\) has degree at most
		\(2J+S\). A spherical \(t\)-design with \(t\ge2J+S\) therefore
		gives
		\begin{equation}
			\frac1M\sum_{k=1}^{M}F_{\psi\phi}(\n_k)
			=
			\frac1{4\pi}
			\int_{\sphere}F_{\psi\phi}(\n)\,\dd\Omega .
		\end{equation}
		Multiplying by \(2J+1\) and using
		Eq.~\eqref{eq:rank-r-continuous-representation} proves reconstruction given in
		Eq.~\eqref{eq:exp}.
		
		For necessity, suppose that Eq.~\eqref{eq:exp} holds for every
		\(A^{(S)}\in\mathcal V_S\). Applying it to every basis tensor
		\(T^{(J)}_{Sq}\), substituting
		Eqs.~\eqref{eq:projector-expansion} and~\eqref{eq:tensor-K},
		and comparing the linearly independent tensors
		\(T^{(J)}_{Kq'}\) gives
		\begin{equation}
			\frac{4\pi}{M}
			\sum_{k=1}^{M}
			Y_{Sq}(\n_k)Y^*_{Kq'}(\n_k)
			=
			\delta_{SK}\delta_{qq'}
			\label{eq:sector-product-quadrature}
		\end{equation}
		for
		\[
		-S\le q\le S,\qquad
		0\le K\le2J,\qquad
		-K\le q'\le K.
		\]
		Equivalently, the discrete and continuous averages agree for
		every product \(Y_{Sq}Y^*_{Kq'}\).
		
		To extract the individual harmonic moments, use the standard
		Clebsch--Gordan product formula
		\cite{Varshalovich:1988aa}
		\begin{equation}
			Y_{Sq}Y^*_{Kq'}
			=
			(-1)^{q'}
			\sum_{L=|S-K|}^{S+K}
			\sum_{m=-L}^{L}
			B_{SKL}\,
			C^{Lm}_{Sq,K,-q'}Y_{Lm},
			\label{eq:spherical-harmonic-product}
		\end{equation}
		where
		\begin{equation}
			B_{SKL}
			=
			\sqrt{
				\frac{(2S+1)(2K+1)}
				{4\pi(2L+1)}
			}
			C^{L0}_{S0,K0},
		\end{equation}
		and terms with \(S+K+L\) odd vanish. Clebsch--Gordan
		orthogonality gives, for every allowed \(L,m\) with
		\(B_{SKL}\ne0\),
		\begin{equation}
			Y_{Lm}
			=
			\frac1{B_{SKL}}
			\sum_{q=-S}^{S}
			\sum_{q'=-K}^{K}
			(-1)^{q'}
		(C^{Lm}_{Sq,K,-q'}\bigr)^\ast
			Y_{Sq}Y^*_{Kq'} .
			\label{eq:spherical-harmonic-product-inverse}
		\end{equation}
		Applying the difference between the discrete and continuous
		averages to
		Eq.~\eqref{eq:spherical-harmonic-product-inverse}, and using
		Eq.~\eqref{eq:sector-product-quadrature}, shows that every
		allowed \(Y_{Lm}\) is integrated exactly.
		Finally, for any \(0\le L\le2J+S\), choose
		\(K=|L-S|\).
		Then \(K\le2J\), the triangle condition is saturated, and
		\(S+K+L\) is even, so \(B_{SKL}\ne0\). Hence
		\begin{equation}
			\sum_{k=1}^{M}Y_{Lm}(\n_k)=0,
			\qquad
			1\le L\le2J+S,
			\qquad
			-L\le m\le L.
		\end{equation}
		By Eq.~\eqref{eq:design-harmonic-condition}, \(X\) is a
		spherical \(t\)-design with \(t=2J+S\). 
	\end{proof}
	
The theorem does not assert necessity for one prescribed operator, it allows accidental exactness for special operators with point configurations below the threshold. 
Its strength lies in the equivalence between single geometric condition on the design and exact reconstruction of the entire irreducible rank-$S$ sector.

The condition $t \ge 2J+S$ also has a transparent operational reading.  For a spherical $t$-design with $t\ge 2J$, the highest rank reconstructed exactly is
\begin{equation}
	R_{\max} = \min\{2J,\,t-2J\}.
\end{equation}
Equivalently, for any integer $0\le R\le 2J$, designs satisfying $t\ge 2J+R$ reconstruct all irreducible sectors $0\le S \le R$ simultaneously, and $t\ge 4J$ suffices for reconstruction of the full operator space $\mathcal{B}(\Hil_{J})$.

More generally, if $A = \sum_{S=0}^{R} A^{(S)}$, with $A^{(S)}\in V_{S}$, then $t\ge 2J+R$ and linearity of the theorem together imply
\begin{equation}
	A = \frac{2J+1}{M} \sum_{k=1}^{M} K_{A}(\n_{k})\,P(\n_{k}),
\end{equation}
where $K_{A}(\n)=\sum_{S=0}^{R}K_{A^{(S)}}(\n)$.
This includes all symmetrized polynomials in the spin operators whose tensor rank is at most $R$.

The practical payoff is immediate.  The two most common measurement tasks, estimating expectation values of spin operators and their variances, correspond to ranks $S=1$ and $S=2$, respectively. Theorem~\ref{thm:rank-r-quadrature} therefore guarantees that sampling from designs of order $2J+1$ and $2J+2$ suffices for these tasks exactly, without any recourse to full quantum state tomography.

\begin{figure}[t]
		\centering
		\includegraphics[width=\linewidth]{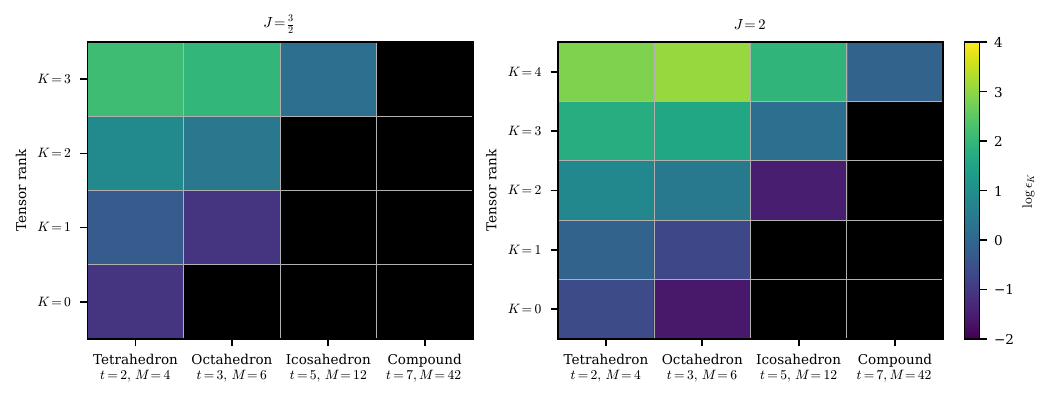}
		\caption{ Hilbert--Schmidt reconstruction error for irreducible spin-tensor operators using a coherent-state decomposition generated by analytical spherical designs given by chosen Platonic solids \cite{GoethalsSeidel1979} and a compound of seven octahedra~\cite{Rudzinski2026}. For each spin \(J\), tensor rank \(K\), and design, the maximal Hilbert--Schmidt distance between spin-tensor \(T_{Kq}^{(J)}\) and its reconstruction via coherent design is plotted.  Black cells denote analytically exact reconstruction. 	}
		\label{fig:tensor-error-heatmaps}
	\end{figure}

In general, when the design order falls below the threshold, the discrete sum no longer reproduces the target operator exactly.  The resulting error has a precise algebraic expression in terms of the spherical-harmonic Gram matrix of the
point set.  For $X=\{\n_{k}\}_{k=1}^{M}$, define the reconstruction map
\begin{equation}
	\mathcal{R}_{X}(A) =
	\frac{2J+1}{M}
	\sum_{k=1}^{M} K_{A}(\n_{k})\,P(\n_{k}).
\end{equation}
For a tensor basis element $T_{Kq}^{(J)}$, a direct computation gives
\begin{equation}
	\mathcal{R}_{X}(T_{Kq}^{(J)}) = \sum_{K'=0}^{2J}\sum_{q'=-K'}^{K'}
	\frac{a_{JK'}}{a_{JK}}\, G^{X}_{Kq,K'q'}\, T_{K'q'},
\end{equation}
where
\begin{equation}
	G^{X}_{Kq,K'q'} = \frac{4\pi}{M} \sum_{k=1}^{M} Y_{Kq}(\n_{k})\,Y^{*}_{K'q'}(\n_{k})
\end{equation}
is the discrete harmonic Gram matrix.  The Hilbert--Schmidt reconstruction
error is therefore
\begin{equation}
	\bigl\|T_{Kq}^{(J)}- \mathcal{R}_{X}(T_{Kq}^{(J)}) \bigr\|_{\mathrm{HS}}^{2}
	= \sum_{K'=0}^{2J}\sum_{q'=-K'}^{K'}	\left| 	\delta_{KK'}\delta_{qq'}	-
	\frac{a_{JK'}}{a_{JK}}\,G^{X}_{Kq,K'q'}	\right|^{2} \, . 
\end{equation}
For a spherical $t$-design, the off-diagonal entries of $G^{X}$ vanish whenever $K+K'\le t$, so that
$G^{X}_{Kq,K'q'}=\delta_{KK'}\delta_{qq'}$ in that range, and the error is exactly zero, in precise agreement with
Theorem~\ref{thm:rank-r-quadrature}.

To visualize how the error decays as the design order increases, in Fig.~\ref{fig:tensor-error-heatmaps} we plot the rank-$K$ componentwise maximal error
\begin{equation}
	\epsilon_{K}(X,J) = \max_{-K\le q \le K} 	\bigl\|T^{(J)}_{Kq}-\mathcal{R}_{X}(T_{Kq}^{(J)})\bigr\|_{\mathrm{HS}}
\end{equation} for a range of spins $J$, ranks $K$, and designs generated by Platonic solids and related configurations. The black cells--exact reconstruction--trace out precisely the region $t\ge 2J+K$ predicted by the theorem, providing a vivid confirmation of its sharpness.
 
\section{Weighted Gauss--Legendre construction}
	\label{sec:gauss-legendre}

The spherical designs used in the preceding section have the elegant property of equal weights, but their explicit construction can be nontrivial.  Here we present a complementary approach that is fully explicit and constructive at the cost of introducing nonuniform weights: a Gauss--Legendre quadrature in the polar variable combined with a uniform grid in the azimuthal variable. For
identity resolution, this construction 
 appeared in
 the literature on finite optimal spin measurements
 \cite{BaganBaigMunozTapia2001,IblisdirRoland2006}.
We recall it here and extend
its use to the uniform reconstruction of arbitrary tensor rank
\(S\) by increasing the required polynomial degree from \(2J\)
to \(2J+S\).

Let $R_{\mathrm{max}}$ be the maximal spherical polynomial degree to be integrated exactly. Choose $R$ Gauss--Legendre nodes $x_\alpha \in (-1,1)$ ($\alpha = 1,\ldots,R$); i.e.,  zeros of the Legendre polynomial $P_R(x)$, with associated weights
\begin{equation}
	g_\alpha = \frac{2}{(1-x_\alpha^2)\bigl[P_R'(x_\alpha)\bigr]^2} \, .
\end{equation}
These weights satisfy
	\begin{equation}
		\sum_{\alpha=1}^{R}g_\alpha f(x_\alpha)  =	\int_{-1}^{1} f(x)\,\dd x
	\end{equation}
for every polynomial \(f\) of degree at most \(2R-1\).  Exact integration up to degree $R_{\mathrm{max}}$ therefore requires 	$2R-1\ge R_{\mathrm{max}}$.  	

In the azimuthal direction, choose $Q$ equally spaced angles $\phi_\beta= 2\pi\beta/Q$, with  $\beta=0,\ldots,Q-1$ and $Q>R_{\mathrm{max}}$. The resulting grid points on the sphere are
	\begin{equation}
		\n_{\alpha\beta} =
		\left(
		\sqrt{1-x_\alpha^2}\cos\phi_\beta,\,
		\sqrt{1-x_\alpha^2}\sin\phi_\beta,\,
		x_\alpha
		\right),
	\end{equation}
with normalized weights $w_{\alpha\beta} = g_\alpha /(2Q)$.  We now verify that this weighted grid integrates all spherical polynomials of degree at most $R_{\mathrm{max}}$ exactly.  It suffices to check spherical harmonics. Since $Y_{Kq}(\n_{\alpha\beta}) = 	N_{Kq} P_K^{|q|}(x_\alpha)e^{i q \phi_\beta}$, where \(N_{Kq}\) is a normalization constant,  two cases arise. If  $q \ne 0$, the azimuthal sum has to vanish: $\sum_{\beta=0}^{Q-1}e^{iq\phi_\beta}=0$ because \(|q|\le K\le R_{\mathrm{max}}<Q\).  If \(q=0\), the polar sum vanishes by Gauss--Legendre exactness and the orthogonality of Legendre polynomials:
	\begin{equation}
		\sum_{\alpha=1}^{R} g_\alpha P_K (x_\alpha)
		= 	\int_{-1}^{1}P_K(x)\,\dd x 	= 	0,
		\qquad
		1\le K\le R_{\mathrm{max}} \, .
	\end{equation}
	Combining both cases with the normalization,  one obtains
\begin{equation}
	\sum_{\alpha=1}^{R} \sum_{\beta=0}^{Q-1}
	w_{\alpha\beta}\, f(\n_{\alpha\beta})
	=
	\frac{1}{4\pi}\int_{\sphere} f(\n)\,\dd \Omega
\end{equation}
for every spherical polynomial $f$ of degree at most $R_{\mathrm{max}}$.

For the coherent-state resolution of the identity in dimension $2J+1$, one needs $R_{\mathrm{max}}= 2J$.  
 The smallest choice within this separated-grid construction satisfying
 \(Q>R_{\max}\) is
$ R= \lceil J+1/2 \rceil$ and $ Q=2J+1$, which gives an explicit weighted coherent-state POVM
	\begin{equation}
		E_{\alpha\beta} = (2J+1)w_{\alpha\beta} P(\n_{\alpha\beta}),
	\end{equation}
	or, equivalently,
	\begin{equation}
		\I_{2J+1} = (2J+1) \sum_{\alpha=1}^{R}\sum_{\beta=0}^{Q-1} 	w_{\alpha\beta}P(\n_{\alpha\beta}).
	\end{equation}

The Gauss--Legendre grid therefore furnishes a positive, tight, coherent-state frame.  Unlike an equal-weight spherical design, its weights vary with latitude through the factors $g_\alpha$.  The total number of coherent states required for identity resolution is
\begin{equation}
	M_{\mathrm{GL}} = RQ = (2J+1)\left\lceil\frac{2J+1}{2}\right\rceil.
\end{equation}
In terms of the Hilbert-space dimension $d = 2J+1$, this scales as $ M_{\mathrm{GL}}  \sim  d^2/2,$  so that $d \sim \sqrt{2M_{\mathrm{GL}}}$ is the largest accessible dimension for a given number of measurement points.
	
The same technique  can be used for rank-\(S\) tensor reconstruction.	In that case the required degree is \(R_{\mathrm{max}}=2J+S\).	Indeed, the proof of Theorem~\ref{thm:rank-r-quadrature} only uses exact integration of the product \(K_{A^{(S)}}(\n)\,\bra{\psi}P(\n)\ket{\phi}\), which has spherical degree at most \(R_{\mathrm{max}}=2J + S\).  
Therefore, if \(2R-1\ge 2J+S\), and \(Q>2J+S\),  
the Gauss--Legendre grid gives the following positive-weight counterpart of the equal-weight reconstruction formula~\eqref{eq:exp}:
	\begin{equation}
		A^{(S)}
		=
		(2J+1)
		\sum_{\alpha=1}^{R}\sum_{\beta=0}^{Q-1}
		w_{\alpha\beta}
		K_{A^{(S)}}(\n_{\alpha\beta}) 	P(\n_{\alpha\beta}).
	\end{equation}
    Within this 
    construction, a minimal choice satisfying the two constraints is $ R = \lceil \tfrac12 (2J + S + 1) \rceil$ and $Q = 2J + S + 1$. The corresponding number of coherent states is
	\begin{equation}
		M_{\rm GL}^{(S)} = (2J+S+1)\left\lceil\tfrac{1}{2}(2J+S+1)\right\rceil .
		\label{eq:GL-count-rank-r}
	\end{equation}
    
   More generally, wherever the equal-weight design average appears in a reconstruction identity whose proof rests solely on exact polynomial  integration, it can be replaced by the weighted quadrature average via the substitution
    \begin{equation}
    	\frac{1}{M}\sum_{k=1}^{M}
    	\quad\longrightarrow\quad
    	\sum_{\alpha=1}^{R}\sum_{\beta=0}^{Q-1} w_{\alpha\beta}.
    \end{equation}
    The Gauss--Legendre construction thus provides a systematic, explicit
    alternative to spherical designs for any reconstruction task that depends
    only on exact integration up to a prescribed degree.

\section{Low-rank kernels and finite-shot estimators}
	\label{sec:low-rank-estimators}

The rank theorem gives exact operator decompositions.  We now spell out its most direct operational consequence: low-rank spin observables can be estimated from the outcomes of a spin-coherent design measurement by fixed classical post-processing. No
observable-specific tomography is needed.

Throughout, we sample directly from the equal-weight coherent-state POVM \eqref{eq:POVM}, and write  \(p_k=\Tr(\varrho E_k)\)  for the outcome probabilities of a state \(\varrho\). If an operator admits the finite coherent-state expansion \eqref{eq:exp}, then  its expectation value is a simple linear function of the same outcome probabilities:
\begin{equation}
	\langle A\rangle_\varrho = \sum_{k=1}^M p_k\,K_A(\n_k).
\end{equation}
Here \(K_A\) is a fixed classical post-processing function —a ``score'' assigned to each
outcome— and should not be confused with an eigenvalue of \(A\).
	
The spin vector \(J_x,J_y,J_z\) sits in the rank-one sector, with kernel $ K_{J_a}(\n)=(J+1) \, n_a$ $(a \in \{x,y,z\})$.
Thus, for a design with \(t\ge 2J+1\),
	\begin{equation}
		\mu_a \equiv \langle J_a\rangle_\varrho =
		(J+1)\sum_{k=1}^{M} p_k n_{k,a}.
		\label{eq:mean-spin-estimator-expectation}
	\end{equation}
	
The symmetrized second moment
\(C_{ab}=\tfrac12\{J_a,J_b\}\) decomposes into a scalar part and a
traceless rank-two part. For \(J\ge1\), its nontrivial anisotropic component
therefore lies in the rank-two sector.
 Its kernel is 
$K_{C_{ab}}(\n) = \tfrac{1}{2}[(J+1)(2J+3)] n_an_b - \tfrac{1}{2} (J+1) \delta_{ab}$  
and now, for \(t\ge 2J+2\), 
\begin{equation}
	\langle C_{ab}\rangle_\varrho
	= \sum_{k=1}^M p_k\left[\frac{(J+1)(2J+3)}{2}n_{k,a}n_{k,b} - \frac{J+1}{2}\delta_{ab}\right],
	\label{eq:second-moment-from-design-probabilities}
\end{equation}
and the spin covariance matrix is 
\begin{equation}
	\Gamma_{ab} = \langle C_{ab}\rangle_\varrho - \mu_a\mu_b.
	\label{eq:spin-covariance-matrix}
\end{equation}
In short: once the design degree reaches \(t\ge 2J+2\), the same measurement data fix every first and second spin moment.

In practice, the \(p_k\) are never known exactly; only \(N_s\) independent outcomes \(k_1,\dots,k_{N_s}\) are sampled from \eqref{eq:POVM}. Each outcome contributes a score, obtained by evaluating the relevant kernel at the sampled
direction:
\begin{equation}
\begin{aligned}
	X_a(k) &= (J+1)n_{k,a}, \\
	Y_{ab}(k) &= \frac{(J+1)(2J+3)}{2}n_{k,a}n_{k,b} - \frac{J+1}{2}\delta_{ab}.
\end{aligned}
\end{equation}
Averaging these scores gives unbiased estimators of the mean spin and the second moment, for any \(N_{s}\):
\begin{equation}
	\hat\mu_a = \frac{1}{N_{s}} \sum_{i=1}^{N_{s}} X_a(k_i), \qquad
	\hat C_{ab} = \frac{1}{N_{s}}\sum_{i=1}^{N_{s}} Y_{ab}(k_i).
\end{equation}
Since the outcomes are independent and
\(\lvert X_a(k)\rvert\le J+1\), the variance of the mean-spin
estimator obeys the simple state-independent bound
\begin{equation}
    \operatorname{Var}(\widehat{\mu}_a)
    =
    \frac{\operatorname{Var}(X_a)}{N_s}
    \le
    \frac{(J+1)^2}{N_s}.
    \label{eq:var-bound}
\end{equation}
Thus its standard deviation decreases at least as
\(N_s^{-1/2}\).

Note that  \(\Gamma_{ab}\) is a nonlinear function of \(\mu_a\), so the naive plug-in \(\hat C_{ab}-\hat\mu_a\hat\mu_b\) is biased at finite \(N_{s}\): the product \(\hat\mu_a\hat\mu_b\) mixes same-shot pairs (\(i=l\)) with different-shot pairs (\(i\ne l\)), and only the latter average to \(\mu_a\mu_b\). This is the same issue that Bessel's correction fixes for the ordinary sample variance, here applied to a cross term. Writing
\begin{equation}
	\hat q_{ab} = \frac{1}{N_{s}}\sum_{i=1}^{N_{s}} X_a(k_i)X_b(k_i)
\end{equation}
for the diagonal (same-shot) piece, the corrected, unbiased estimator for \(N_{s}>1\) is
\begin{equation}
	\hat\Gamma_{ab} = \hat C_{ab} - \frac{N_{s}\hat\mu_a\hat\mu_b - \hat q_{ab}}{N_{s}-1}.
\end{equation}

\begin{proposition}
	If \(t\ge 2J+2\) and \(N_{s }>1\), then \(\mathbb E[\hat\Gamma_{ab}]=\Gamma_{ab}\), where $\mathbb E [\cdot ]$ is the average with respect to the distribution $p_k$.
\end{proposition}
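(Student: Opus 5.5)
The plan is to compute the expectation of each term in \(\hat\Gamma_{ab}\) by linearity and the independence of the shots, and then to use the exact reconstruction identities from Theorem~\ref{thm:rank-r-quadrature} to convert single-shot expectations into quantum expectation values. The outcomes \(k_1,\dots,k_{N_s}\) are i.i.d.\ with law \(p_k=\Tr(\varrho E_k)\). So for any score function \(Z\) we have \(\mathbb E[Z(k_i)]=\sum_k p_k Z(k)\). By Eq.~\eqref{eq:mean-spin-estimator-expectation}, valid since \(t\ge 2J+2\ge 2J+1\), this gives \(\mathbb E[X_a(k_i)]=\mu_a\). By Eq.~\eqref{eq:second-moment-from-design-probabilities}, valid since \(t\ge 2J+2\), it gives \(\mathbb E[Y_{ab}(k_i)]=\langle C_{ab}\rangle_\varrho\). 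Consequently \(\mathbb E[\hat\mu_a]=\mu_a\) and \(\mathbb E[\hat C_{ab}]=\langle C_{ab}\rangle_\varrho\). The rank-two step is justified because \(C_{ab}\) splits into a scalar (rank-zero) part and a rank-two part, and the remark after the theorem covers supports \(0\le S\le R=2\) when \(t\ge 2J+R\).

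Next I expand the product of means into diagonal and off-diagonal pairs:
\begin{equation}
N_s^2\,\hat\mu_a\hat\mu_b=\sum_{i}X_a(k_i)X_b(k_i)+\sum_{i\ne l}X_a(k_i)X_b(k_l)=N_s\hat q_{ab}+\sum_{i\ne l}X_a(k_i)X_b(k_l).
\end{equation}
Rearranging gives
\begin{equation}
N_s\hat\mu_a\hat\mu_b-\hat q_{ab}=\frac{1}{N_s}\sum_{i\ne l}X_a(k_i)X_b(k_l).
\end{equation}
For \(i\ne l\), independence gives \(\mathbb E[X_a(k_i)X_b(k_l)]=\mu_a\mu_b\). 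There are \(N_s(N_s-1)\) such ordered pairs, so the expectation of the right-hand side is \((N_s-1)\mu_a\mu_b\). Dividing by \(N_s-1\), which is legitimate because \(N_s>1\), shows that the subtracted term has expectation exactly \(\mu_a\mu_b\).

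Combining the two pieces gives \(\mathbb E[\hat\Gamma_{ab}]=\langle C_{ab}\rangle_\varrho-\mu_a\mu_b=\Gamma_{ab}\), by Eq.~\eqref{eq:spin-covariance-matrix}. I expect no serious obstacle here; the only point needing care is to make sure the design-order hypothesis covers both kernels. Specifically, the rank-one kernel \(K_{J_a}\) needs \(t\ge 2J+1\), and the kernel of \(C_{ab}\) (rank \(\le 2\), including its scalar part) needs \(t\ge 2J+2\). I would also note that for \(J=1/2\) the rank-two sector is absent, and the formula still holds because only lower-rank reconstruction is used.
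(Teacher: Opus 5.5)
Your proposal is correct and follows the paper's proof essentially line for line. You obtain \(\mathbb E[\hat C_{ab}]=\langle C_{ab}\rangle_\varrho\) from the design identity for the second moment, and you rewrite the correction term as the off-diagonal average \(\frac{1}{N_s(N_s-1)}\sum_{i\ne l}X_a(k_i)X_b(k_l)\), each term of which has expectation \(\mu_a\mu_b\) by independence.

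One small imprecision is in your \(J=1/2\) aside. There the score \(Y_{ab}=3n_an_b-\tfrac34\delta_{ab}\) is not the canonical (constant) rank-zero symbol of \(C_{ab}=\tfrac14\delta_{ab}\I_2\); it still carries a degree-two harmonic. So exactness comes from the degree count \(2+2J\le t\) in the sufficiency half of the proof of Theorem~\ref{thm:rank-r-quadrature}, rather than from ``only lower-rank reconstruction'' being used. Indeed, a \(2\)-design such as the tetrahedron would bias \(\hat C_{xy}\).
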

\begin{proof}
	Since \(\hat C_{ab}\) is a sample mean of \(Y_{ab}\), \(\mathbb E[\hat C_{ab}]=\langle C_{ab}\rangle_\varrho\).
	For the correction term,
	\begin{equation}
	\frac{N_{s}\hat\mu_a\hat\mu_b - \hat q_{ab}}{N_{s}-1} = \frac{1}{N_{s}(N_{s}-1)}\sum_{i\ne l} X_a(k_i)X_b(k_l).
	\end{equation}
	Each \(i\ne l\) term pairs two independent shots, so has expectation \(\mu_a\mu_b\); averaging over all \(N_{s}(N_{s}-1)\) such pairs leaves the expectation unchanged. Hence
	\begin{equation}
	\mathbb E[\hat\Gamma_{ab}] = \langle C_{ab}\rangle_\varrho - \mu_a\mu_b = \Gamma_{ab}. \qedhere
	\end{equation}
\end{proof}

The design condition \(t\ge 2J+2\) guarantees these estimators are unbiased for spin observables up to rank two, for any state \(\varrho\). 
The bound~\eqref{eq:var-bound} controls the finite-shot fluctuations of
the mean-spin estimator and displays the expected \(N_s^{-1/2}\)
scaling of its statistical uncertainty.

\section{Concluding remarks}
\label{sec:conclusions}

An overcomplete family of states resolves the identity  through a continuum of outcomes. For canonical coherent states this is not a practical obstacle: the von Neumann lattice replaces the continuous integral with a discrete, evenly spaced grid, exactly at the minimal point density a shift-invariant structure allows. 
For spin, finite coherent-state tomography and discrete operator bases were
already known. The question addressed here is when the canonical
contravariant-symbol integral itself is reproduced by a finite equal-weight sum
uniformly on an irreducible tensor sector.
We have shown that spherical designs answer this more specific question:
spherical designs give a discrete substitute for the coherent-state continuum on \(\sphere\), in the precise sense that exact reconstruction of the rank-\(S\) sector of operator space on \(\mathcal H_J\) is equivalent to the defining averaging property of a spherical \(t\)-design with \(t\ge 2J+S\).

The number of coherent states needed for any fixed-rank sector grows only quadratically in the design degree, and therefore only polynomially in the Hilbert-space dimension, rather than exponentially. Unlike the planar lattice, whose unit cell has fixed area \(h\) independent of the operator being probed, the spherical construction has no translation symmetry, and the required design order depends on both the spin \(J\) and the target rank \(S\): the cell area decreases as \(J^{-2}\) rather than staying fixed. This difference is why the argument here is a degree-counting argument rather than a lattice construction.

The two constructions we give serve different purposes. Spherical designs are the sparsest solution, but their explicit form is a nontrivial result of design theory. 
The weighted Gauss--Legendre grid is an explicit alternative for any spin and rank, at the cost of abandoning equal weights. In this explicit protocol, identity resolution requires \(d \lceil d/2\rceil\) coherent states, whereas reconstruction of the full operator space, and hence full quantum-state tomography, requires \(d(2d-1)\) coherent states, where \(d=2J+1\).
Together they guarantee that a usable POVM is always available. In both cases the POVM is built from coherent states, which are straightforward to prepare and measure; this is the difference between a spin-coherent design and a generic tensor design, whose defining average runs over the whole of Hilbert space and need not consist of physically simple states.

The practical consequence is that reconstruction reduces to fixed classical post-processing: fix the POVM once, record outcomes, and recover any low-order spin moment  by evaluating a fixed kernel and averaging, with no adaptive measurement and no full state tomography. 

Some questions remain open. Explicit, near-minimal designs for arbitrary \(J\) and rank, beyond the Platonic solids and the Gauss--Legendre grid used here, are not yet known; nor is it known whether unequal weights can beat the equal-weight point count
at finite \(J\). More broadly, spin-coherent designs join a small set of finite structures that replace continuous averages, including weighted complex-projective designs used for optimal state discrimination~\cite{Goyenecheetal2015,Goyenecheetal2018}
and the discrete sums appearing in lattice gauge theory~\cite{Ruhl1982}.  The plane and the sphere both do; what determines the answer for other phase spaces is left for future work.

	\appendix

\section{Contravariant symbols} 
\label{app:tensor-normalizations}

This appendix fixes the angular-momentum conventions used in the main text and gives the normalization constants needed to reproduce the numerical calculations.   Recall the spin-coherent state pointing along \(\n\), $	|\n\rangle_J =	U_J(R_{\n})|J,J\rangle$, 	where \(R_{\n}\) sends the north pole to	\(\n\).  The coherent-state projector at the north pole reads
\begin{equation}
	P^{(J)}(\bm e_z)=|JJ\rangle\langle JJ|.
\end{equation}
Using the definition of the irreducible tensors, only the term \(q=0\) contributes, and
\begin{equation}
	P^{(J)}(\bm e_z)
	=
	\sum_{K=0}^{2J}
	b_{JK}\,T^{(J)}_{K0},
	\qquad
	b_{JK}
	=
	\sqrt{\frac{2K+1}{2J+1}}\,
	C^{\,JJ }_{JJ,\,K0}.
	\label{eq:app-north-projector-expansion}
\end{equation}
Rotating Eq.~\eqref{eq:app-north-projector-expansion} and using the expression of the Wigner $D$-matrices~\cite{Varshalovich:1988aa} we get
\begin{equation}
	P^{(J)}(\n) =
	\sum_{K=0}^{2J}\sum_{q=-K}^{K}
	a_{JK}\,Y^*_{Kq}(\n)\,T^{(J)}_{Kq},
\end{equation}
where $a_{JK}$ are defined in \eqref{eq:ajL-main}.

Using the irreducible tensors, the spin  generators can be expressed as 
\begin{equation}
	J_q
	= \sqrt{\frac{J(J+1)d}{3}}\,
	T^{(J)}_{1q}.
\end{equation}
Together with Eq.~\eqref{eq:tensor-K}, this gives
\begin{equation}
	K_{J_q}(\n)=
	(J+1)n_q .
	\label{eq:spin-contravariant-symbol}
\end{equation}

To derive the second moment contravariant symbol, recall the symmetrized second moment
$ C_{ab} = \tfrac12\{J_a,J_b\}$.  Define the following SO(3) covariant quantity
\begin{equation}
	I_{ab}
	=
	\frac{2J+1}{4\pi}
	\int_{\sphere}
	n_a n_b\,P^{(J)}(\n)\,d\Omega .
	\label{eq:app-Iab-definition}
\end{equation}
By covariance, \(I_{ab}\) must have the form
\(I_{ab}
=
\alpha C_{ab}
+
\beta\delta_{ab} \I_{2J+1} \).
Taking the trace gives
\begin{equation}
	\frac{1}{3}
	=
	\frac{J(J+1)}{3}\alpha
	+
	\beta .
	\label{eq:app-alpha-beta-trace}
\end{equation}
To determine the second coefficient, evaluate the \(zz\) component in the highest-weight state.
Since
\begin{equation}
	|\langle JJ|\n\rangle_J|^2
	=
	\left(\frac{1+\cos\theta}{2}\right)^{2J},
\end{equation}
one obtains
\begin{equation}
	\langle JJ|I_{zz}|JJ\rangle=
	\frac{2J+1}{2}
	\int_{-1}^{1}
	x^2\left(\frac{1+x}{2}\right)^{2J}dx=
	\frac{2J^2+J+1}{(J+1)(2J+3)} .
\end{equation}
On the other hand,	\(\langle JJ |I_{zz}|JJ\rangle=\alpha J^2+\beta \).
Solving together with Eq.~\eqref{eq:app-alpha-beta-trace} yields
\begin{equation}
	\alpha=\frac{2}{(J+1)(2J+3)},
	\qquad
	\beta=\frac{1}{2J+3}.
\end{equation}
Therefore
\begin{equation}
	I_{ab}
	=
	\frac{2}{(J+1)(2J+3)}\,C_{ab}
	+
	\frac{\delta_{ab}}{2J+3}\,\I_{2J+1} .
\end{equation}
Inverting this relation gives
\begin{equation}
	C_{ab}
	=
	\frac{(J+1)(2J+3)}{2}\,I_{ab}
	-
	\frac{J+1}{2}\,\delta_{ab}\I_{2J+1} .
\end{equation}
Using Eqs.~\eqref{eq:app-Iab-definition} and the identity resolution
\begin{equation}
	\I_{2J+1}
	=
	\frac{2J+1}{4\pi}\int_{\sphere}P^{(J)}(\n)\,d\Omega ,
\end{equation}
we obtain
\begin{equation}
	C_{ab} = 	\frac{2J+1}{4\pi}
	\int_{\sphere}
	K_{ab}(\n)\,P^{(J)}(\n)\,d\Omega ,
\end{equation}
where
\begin{equation}
	K_{ab}(\n)
	=
	\frac{(J+1)(2J+3)}{2}\,n_a n_b
	-
	\frac{J+1}{2}\,\delta_{ab}.
	\label{eq:second-moment-contravariant}
\end{equation}

\section{Explicit coherent-state decompositions}
\label{app:explicit-decompositions}

We present two compact examples of the rank theorem.  The first gives a
particularly simple decomposition of rank-one tensors, while the second
illustrates the highest nontrivial ranks reconstructed by the regular
icosahedron.

\subsection{Regular octahedron}

Consider the six vertices of a regular octahedron
\begin{equation}
	\mathcal O
	=
	\{
	\pm\mathbf e_x,
	\pm\mathbf e_y,
	\pm\mathbf e_z
	\}
	\subset \sphere .
\end{equation}
They form a spherical \(3\)-design.  According to Proposition \ref{prop:identity-decomposition}, it resolves the
identity for every \(J\leq 3/2\)
\begin{equation}
	\I_{2J+1}
	=
	\frac{2J+1}{6}
	\sum_{\n\in\mathcal O}
	P^{(J)}(\n).
\end{equation}
For \(J=1\), it also reconstructs the complete rank-one sector.  Writing
\begin{equation}
	P_{\pm a}
	=
	P^{(1)}(\pm\mathbf e_a),
	\qquad
	a=x,y,z,
\end{equation}
one finds
\begin{equation}
	P_{+a}-P_{-a}=J_a.
\end{equation}
Normalized tensors \(T_a=J_a/\sqrt{2}\), with orthogonality condition \(\Tr(T_a^\dagger T_b)=\delta_{ab},\) can be written using the standard Condon--Shortley convention for spherical
components\cite{Varshalovich:1988aa}
\begin{equation}
	T_{1,0}=T_z,
	\qquad
	T_{1,\pm1}
	=
	\mp\frac{T_x\pm iT_y}{\sqrt{2}}.
	\label{eq:app-rank-one-spherical-cartesian}
\end{equation}
Then, any rank-one operator \(A^{(1)}=\sum_{a=x,y,z}c_aT_a,\) can be decomposed as
\begin{equation}
	A^{(1)}
	=
	\frac{1}{\sqrt{2}}
	\sum_{a=x,y,z}
	c_a\left(P_{+a}-P_{-a}\right).
\end{equation}

\subsection{Regular icosahedron}

Recall the golden ratio
\begin{equation}
	\varphi=\frac{1+\sqrt{5}}{2},
\end{equation}
and choose the twelve unit vectors forming a regular icosahedron
\begin{equation}
	\mathcal I
	=
	\frac{1}{\sqrt{1+\varphi^2}}
	\left\{
	(0,\pm1,\pm\varphi),
	(\pm1,\pm\varphi,0),
	(\pm\varphi,0,\pm1)
	\right\}.
\end{equation}
They form a spherical
\(5\)-design 
\(\mathcal I=\{\n_k\}_{k=1}^{12}\).  Writing
\(P_k^{(J)}=P^{(J)}(\n_k)\) in Eq.\eqref{eq:sc-identity-resolution}, one has the following identity resolution
\begin{equation}
	\I_{2J+1}
	=
	\frac{2J+1}{12}
	\sum_{k=1}^{12}P_k^{(J)},
\end{equation}
for \(J\leq5/2\). The same point set reconstructs the complete rank-one sector for
\(J\leq2\) and the complete rank-two sector for \(J\leq3/2\), according to Theorem~\ref{thm:rank-r-quadrature}.

At the maximal spin \(J=2\) for rank-one reconstruction,
Eqs.~\eqref{eq:exp} and~\eqref{eq:spin-contravariant-symbol} give
\begin{equation}
	J_a
	=
	\frac{5}{4}
	\sum_{k=1}^{12}
	n_{k,a}P_k^{(2)},
	\qquad
	a=x,y,z.
\end{equation}
Hence every rank-one spherical tensor follows immediately by taking the
corresponding linear combinations of \(J_x,J_y,J_z\).

For \(J=3/2\), define the traceless Cartesian rank-two tensor
\begin{equation}
	Q_{ab}
	=
	\frac{1}{2}\{J_a,J_b\}
	-
	\frac{J(J+1)}{3}\delta_{ab}\I_{2J+1}
	=
	\frac{1}{2}\{J_a,J_b\}
	-
	\frac{5}{4}\delta_{ab}\I_4 .
\end{equation}
This is the symmetric traceless part of the Cartesian tensor
\(J_aJ_b\), and therefore transforms as an irreducible tensor of  rank two\cite{Varshalovich:1988aa}.
Using the contravariant symbol given by Eq.~\eqref{eq:second-moment-contravariant} one can derive exact icosahedral decomposition
\begin{equation}
	Q_{ab}
	=
	\frac{5}{2}
	\sum_{k=1}^{12}
	\left(
	n_{k,a}n_{k,b}
	-
	\frac{1}{3}\delta_{ab}
	\right)
	P_k^{(3/2)}.
\end{equation}
The five spherical components \(T_{2q}\) are obtained from the standard linear combinations of the Cartesian tensors \(Q_{ab}\). 
With the Condon--Shortley convention and  normalization \(\Tr(T_{2q}^{\dagger}T_{2q'})=\delta_{qq'}\), the spherical
components are
\begin{align}
	T_{2,0}
	&=
	\frac{Q_{zz}}{2},
	&
	T_{2,\pm1}
	&=
	\mp\frac{Q_{xz}\pm iQ_{yz}}{\sqrt{6}},\nonumber
	\\
	T_{2,\pm2}
	&=
	\frac{Q_{xx}-Q_{yy}\pm2iQ_{xy}}{2\sqrt{6}}.
\end{align}

\section*{Acknowledgments}
M.R. and K.{\.Z}. gratefully acknowledge financial support from the European Union under ERC Advanced Grant TAtypic, project number 101142236.
A.K. is grateful for support from the Mexican Consejo Nacional de Humanidades, Ciencias y Tecnologías (Grant No. CBF2023-2024-50). L.L.S.-S. thanks the Spanish Agencia Estatal de Investigación (Grants No. PID2021-127781NB-I00 and PID2025-172975NB-I00) for its support.

\bibliography{biblio.bib}

@article{BaganBaigMunozTapia2001,
  author  = {Bagan, E. and Baig, M. and Mu{\~n}oz-Tapia, R.},
  title   = {Aligning Reference Frames with Quantum States},
  journal = {Phys. Rev. Lett.},
  volume  = {87},
  pages   = {257903},
  year    = {2001},
  doi     = {10.1103/PhysRevLett.87.257903}
}

@article{HeissWeigert2000,
  author  = {Heiss, Stephan and Weigert, Stefan},
  title   = {Discrete {Moyal}-type representations for a spin},
  journal = {Phys. Rev. A},
  volume  = {63},
  pages   = {012105},
  year    = {2000},
  doi     = {10.1103/PhysRevA.63.012105}
}

@book{Perelomov:1986aa,
  author     = {Perelomov, A. M.},
  title      = {Generalized Coherent States and Their Applications},
  publisher  = {Springer},
  address    = {Berlin},
  year       = {1986},
  doi        = {10.1007/978-3-642-61629-7}
}

@book{Gazeau:2009aa,
  author     = {Gazeau, J.-P.},
  title      = {Coherent States in Quantum Physics},
  publisher  = {Wiley-VCH},
  address    = {Weinheim},
  year       = {2009},
  doi        = {10.1002/9783527628285}
}

@book{Robert:2021aa,
  author     = {D. Robert and M. Combescure},
  title      = {Coherent States and Applications in Mathematical Physics},
  edition    = {Second},
  publisher  = {Springer},
  address    = {Cham},
  year       = {2021},
  doi        = {10.1007/978-3-030-70845-0}
}

@book{Kam:2023aa,
  author     = {C.-F. Kam and W.-M. Zhang and D.-H. Feng},
  title     = {Coherent States: New Insights into Quantum Mechanics with Applications},
  publisher  = {Springer},
  address    = {Cham},
  year       = {2023},
  doi        = {10.1007/978-3-031-20766-2}
}

@book{Klauder:1985aa,
  editor     = {Klauder, John R. and Skagerstam, Bo-Sture},
  title      = {Coherent States: Applications in Physics and Mathematical Physics},
  publisher  = {World Scientific},
  address    = {Singapore},
  year       = {1985},
  doi        = {10.1142/0096}
}

@book{Schroeck:1996fv,
  author     = {F. E. Schroeck},
  title      = {Quantum Mechanics on Phase Space},
  publisher  = {Kluwer Academic Publishers},
  address    = {Dordrecht},
  year       = {1996},
  doi        = {10.1007/978-94-017-2830-0}
}

@book{Schleich:2001hc,
  author     = {W. P. Schleich},
  title      = {Quantum Optics in Phase Space},
  publisher  = {Wiley-VCH},
  address    = {Berlin},
  year       = {2001},
  doi        = {10.1002/3527602976}
}

@book{QMPS:2005mi,
  editor     = {C. K. Zachos and D. B. Fairlie and T. L. Curtright},
  title      = {Quantum Mechanics in Phase Space: An Overview with Selected Papers},
  publisher  = {World Scientific},
  address    = {Singapore},
  year       = {2005},
  doi        = {10.1142/5287}
}

@article{Janssen:1981aa,
  author     = {Janssen, A. J. E. M.},
  title      = {Weighted {Wigner} distributions vanishing on lattices},
  journal    = {J. Math. Anal. Appl.},
  volume     = {80},
  pages      = {156--167},
  year       = {1981},
  doi        = {10.1016/0022-247X(81)90099-8}
}

@book{Ueberhuber:1997aa,
  author     = {C. W. Ueberhuber},
  title      = {Numerical Computation 2: Methods, Software, and Analysis},
  publisher  = {Springer},
  address    = {Berlin},
  year       = {1997},
  doi        = {10.1007/978-3-642-59109-9}, 
  ISBN = {9783642591099},
  url = {https://link.springer.com/book/9783540620570}
}

@article{Delsarte:1977aa,
  author     = {Delsarte, P. and Goethals, J. M. and Seidel, J. J.},
  title      = {Spherical codes and designs},
  journal    = {Geom. Dedicata},
  volume     = {6},
  pages      = {363--388},
  year       = {1977},
  doi        = {10.1007/bf03187604}
}

@incollection{Goethals:1981aa,
  author     = {Goethals, J. M. and Seidel, J. J.},
  title      = {Cubature Formulae, Polytopes, and Spherical Designs},
  booktitle  = {The Geometric Vein},
  editor     = {Davis, Chandler and Gr{\"u}nbaum, Branko and Sherk, F. A.},
  pages      = {203--218},
  publisher  = {Springer},
  address    = {New York},
  year       = {1981},
  doi        = {10.1007/978-1-4612-5648-9_13}
}

@book{Colbourn:2010aa,
  editor    = {Colbourn, Charles J. and Dinitz, Jeffrey H.},
  title     = {Handbook of Combinatorial Designs},
  edition   = {2},
  publisher = {Chapman \& Hall/CRC},
  address   = {Boca Raton, FL},
  year      = {2007}
}

@book{Neumann:1932aa,
  author     = {von Neumann, John},
  title      = {Mathematische Grundlagen der Quantenmechanik},
  publisher  = {Springer},
  address    = {Berlin},
  year       = {1932}
}

@article{Perelomov:1971aa,
  author     = {Perelomov, A. M.},
  title      = {On the completeness of a system of coherent states},
  journal    = {Theor. Math. Phys.},
  volume     = {6},
  pages      = {156--164},
  year       = {1971},
  doi        = {10.1007/BF01036577}
}

@article{Bargmann:1971aa,
  author     = {Bargmann, V. and Butera, P. and Girardello, L. and Klauder, John R.},
  title      = {On the completeness of the coherent states},
  journal    = {Rep. Math. Phys.},
  volume     = {2},
  pages      = {221--228},
  year       = {1971},
  doi        = {10.1016/0034-4877(71)90006-1}
}

@article{Bacry:1975aa,
  author     = {Bacry, H. and Grossmann, A. and Zak, J.},
  title      = {Proof of completeness of lattice states in the $kq$ representation},
  journal    = {Phys. Rev. B},
  volume     = {12},
  pages      = {1118--1120},
  year       = {1975},
  doi        = {10.1103/PhysRevB.12.1118}
}

@article{Boon:1983aa,
  author     = {Boon, M. and Zak, J. and Zucker, I. J.},
  title      = {Rational von {Neumann} lattices},
  journal    = {J. Math. Phys.},
  volume     = {24},
  pages      = {316--323},
  year       = {1983},
  doi        = {10.1063/1.525682}
}

@article{Gottesman:2001aa,
  author     = {Gottesman, Daniel and Kitaev, Alexei and Preskill, John},
  title      = {Encoding a qubit in an oscillator},
  journal    = {Phys. Rev. A},
  volume     = {64},
  pages      = {012310},
  year       = {2001},
  doi        = {10.1103/physreva.64.012310}
}

@article{Menicucci:2014aa,
  author     = {Menicucci, Nicolas C.},
  title      = {Fault-Tolerant Measurement-Based Quantum Computing with Continuous-Variable Cluster States},
  journal    = {Phys. Rev. Lett.},
  volume     = {112},
  pages      = {120504},
  year       = {2014},
  doi        = {10.1103/PhysRevLett.112.120504}
}

@article{Duivenvoorden:2017aa,
  author     = {Duivenvoorden, Kasper and Terhal, Barbara M. and Weigand, Daniel},
  title      = {Single-mode displacement sensor},
  journal    = {Phys. Rev. A},
  volume     = {95},
  pages      = {012305},
  year       = {2017},
  doi        = {10.1103/PhysRevA.95.012305}
}

@article{Fukui:2021aa,
  author     = {Fukui, Kosuke and Alexander, Rafael N. and van Loock, Peter},
  title      = {All-optical long-distance quantum communication with {G}ottesman-{K}itaev-{P}reskill qubits},
  journal    = {Phys. Rev. Res.},
  volume     = {3},
  pages      = {033118},
  year       = {2021},
  doi        = {10.1103/PhysRevResearch.3.033118}
}

@article{Conrad:2022aa,
  author     = {Conrad, Jonathan and Eisert, Jens and Arzani, Francesco},
  title      = {Gottesman-{K}itaev-{P}reskill codes: {A} lattice perspective},
  journal    = {{Quantum}},
  volume     = {6},
  pages      = {648},
  year       = {2022},
  doi        = {10.22331/q-2022-02-10-648}
}

@article{Atkins:1971aa,
  author     = {Atkins, P. W. and Dobson, J. C.},
  title      = {Angular momentum coherent states},
  journal    = {Proc. R. Soc. A},
  volume     = {321},
  pages      = {321--340},
  year       = {1971},
  doi        = {10.1098/rspa.1971.0035}
}

@article{Radcliffe:1971aa,
  author     = {Radcliffe, J. M.},
  title      = {Some properties of coherent spin states},
  journal    = {J. Phys. A: Gen. Phys.},
  volume     = {4},
  pages      = {313--323},
  year       = {1971},
  doi        = {10.1088/0305-4470/4/3/009}
}

@article{Arecchi:1972aa,
  author     = {Arecchi, F. T. and Courtens, Eric and Gilmore, Robert and Thomas, Harry},
  title      = {Atomic Coherent States in Quantum Optics},
  journal    = {Phys. Rev. A},
  volume     = {6},
  pages      = {2211--2237},
  year       = {1972},
  doi        = {10.1103/physreva.6.2211}
}

@article{Kitagawa:1993aa,
  author     = {Kitagawa, Masahiro and Ueda, Masahito},
  title      = {Squeezed spin states},
  journal    = {Phys. Rev. A},
  volume     = {47},
  pages      = {5138--5143},
  year       = {1993},
  doi        = {10.1103/PhysRevA.47.5138}
}

@article{Dammeier:2015aa,
  author     = {Dammeier, Lars and Schwonnek, Ren{\'e} and Werner, Reinhard F},
  title      = {Uncertainty relations for angular momentum},
  journal    = {New J. Phys.},
  volume     = {17},
  pages      = {093046},
  year       = {2015},
  doi        = {10.1088/1367-2630/17/9/093046}
}

@article{Shabbir:2016aa,
  author     = {Shabbir, Saroosh and Bj\"ork, Gunnar},
  title      = {{SU(2)} uncertainty limits},
  journal    = {Phys. Rev. A},
  volume     = {93},
  pages      = {052101},
  year       = {2016},
  doi        = {10.1103/PhysRevA.93.052101}
}

@article{Goldberg:2020aa,
  author     = {Goldberg, Aaron Z. and Klimov, Andrei B. and Grassl, Markus and Leuchs, Gerd and S{\'a}nchez-Soto, Luis L.},
  title      = {Extremal quantum states},
  journal    = {AVS Quantum Sci.},
  volume     = {2},
  pages      = {044701},
  year       = {2020},
  doi        = {10.1116/5.0025819}
}

@inproceedings{Ambainis:2007aa,
  author     = {Ambainis, Andris and Emerson, Joseph},
  title      = {Quantum $t$-designs: $t$-wise Independence in the Quantum World},
  booktitle  = {Twenty-Second Annual IEEE Conference on Computational Complexity (CCC'07)},
  pages      = {129--140},
  year       = {2007},
  doi        = {10.1109/CCC.2007.26}
}

@article{Saff:1997aa,
  author     = {Saff, Edward B and Kuijlaars, Amo B J},
  title      = {Distributing many points on a sphere},
  journal    = {Math. Intelligencer},
  volume     = {19},
  pages      = {5--11},
  year       = {1997},
  doi        = {10.1007/BF03024331}
}

@article{Conway:1996ys,
  author     = {Conway, John H. and Hardin, Ronald H. and Sloane, Neil J. A.},
  title      = {Packing Lines, Planes, etc.: Packings in {G}rassmannian Spaces},
  journal    = {Exp. Math.},
  volume     = {5},
  pages      = {139--159},
  year       = {1996},
  url        = {http://eudml.org/doc/226058}
}

@article{Viazovska:2017aa,
  author     = {Viazovska, Maryna S.},
  title      = {The sphere packing problem in dimension 8},
  journal    = {Ann. Math.},
  volume     = {185},
  pages      = {991--1015},
  year       = {2017},
  url        = {http://www.jstor.org/stable/26395747}
}

@article{Berry:1977aa,
  author     = {Berry, H. G. and Gabrielse, G. and Livingston, A. E.},
  title      = {Measurement of the {Stokes} parameters of light},
  journal    = {Appl. Opt.},
  volume     = {16},
  pages      = {3200--3205},
  year       = {1977},
  doi        = {10.1364/AO.16.003200}
}

@article{Azzam:1985aa,
  author     = {Azzam, R. M. A.},
  title      = {Arrangement of four photodetectors for measuring the state of polarization of light},
  journal    = {Opt. Lett.},
  volume     = {10},
  pages      = {309--311},
  year       = {1985},
  doi        = {10.1364/OL.10.000309}
}

@article{Goldberg:2020ac,
  author     = {A. Z. Goldberg and P. de la Hoz and G. Bj\"{o}rk and A. B. Klimov and M. Grassl and G. Leuchs and L. L. S\'{a}nchez-Soto},
  title      = {Quantum concepts in optical polarization},
  journal    = {Adv. Opt. Photon.},
  volume     = {13},
  pages      = {1--73},
  year       = {2021},
  doi        = {10.1364/AOP.404175}
}

@article{Wasilewski:2010aa,
  author     = {Wasilewski, W. and Jensen, K. and Krauter, H. and Renema, J. J. and Balabas, M. V. and Polzik, E. S.},
  title      = {Quantum Noise Limited and Entanglement-Assisted Magnetometry},
  journal    = {Phys. Rev. Lett.},
  volume     = {104},
  pages      = {133601},
  year       = {2010},
  doi        = {10.1103/PhysRevLett.104.133601}
}

@article{Behbood:2013aa,
  author     = {Behbood, N. and Martin Ciurana, F. and Colangelo, G. and Napolitano, M. and Mitchell, M. W. and Sewell, R. J.},
  title      = {Real-time vector field tracking with a cold-atom magnetometer},
  journal    = {Appl. Phys. Lett.},
  volume     = {102},
  pages      = {173504},
  year       = {2013},
  doi        = {10.1063/1.4803684}
}

@article{Renes:2004aa,
  author     = {Renes, Joseph M. and Blume-Kohout, Robin and Scott, A. J. and Caves, Carlton M.},
  title      = {Symmetric informationally complete quantum measurements},
  journal    = {J. Math. Phys.},
  volume     = {45},
  pages      = {2171--2180},
  year       = {2004},
  doi        = {10.1063/1.1737053}
}

@inproceedings{Klappenecker:2005aa,
  author     = {Klappenecker, A. and R{\"o}tteler, M.},
  title      = {Mutually unbiased bases are complex projective 2-designs},
  booktitle  = {IEEE Int. Symp. Inf. Theory - Proc.},
  pages      = {1740--1744},
  year       = {2005},
  doi        = {10.1109/ISIT.2005.1523643}
}

@article{Cieslinski:2024aa,
  author     = {Pawe{\l} Cie{\'s}li{\'n}ski and Satoya Imai and Jan Dziewior and Otfried G{\"u}hne and Lukas Knips and Wies{\l}aw Laskowski and Jasmin Meinecke and Tomasz Paterek and Tam{\'a}s V{\'e}rtesi},
  title      = {Analysing quantum systems with randomised measurements},
  journal    = {Phys. Rep.},
  volume     = {1095},
  pages      = {1--48},
  year       = {2024},
  doi        = {10.1016/j.physrep.2024.09.009}
}

@article{Blume:2014aa,
  author     = {Blume-Kohout, Robin and Turner, Peter S},
  title      = {The curious nonexistence of {Gaussian} 2-designs},
  journal    = {Commun. Math. Phys.},
  volume     = {326},
  pages      = {755--771},
  year       = {2014},
  doi        = {10.1007/s00220-014-1894-3}
}

@article{Bannai:2009aa,
  author     = {Bannai, Eiichi and Bannai, Etsuko},
  title      = {A survey on spherical designs and algebraic combinatorics on spheres},
  journal    = {Eur. J. Comb.},
  volume     = {30},
  pages      = {1392--1425},
  year       = {2009},
  doi        = {10.1016/j.ejc.2008.11.007}
}

@incollection{Womersley:2018aa,
  author     = {Womersley, Robert S.},
  title      = {Efficient Spherical Designs with Good Geometric Properties},
  booktitle  = {Contemporary Computational Mathematics: A Celebration of the 80th Birthday of Ian Sloan},
  editor     = {Dick, J. and Kuo, F. and Wo{\'z}niakowski, H.},
  pages      = {1243--1285},
  publisher  = {Springer International Publishing},
  address    = {Cham},
  year       = {2018},
  doi        = {10.1007/978-3-319-72456-0_57}
}

@article{Seymour:1984aa,
  author     = {Seymour, P. D and Zaslavsky, Thomas},
  title      = {Averaging sets: A generalization of mean values and spherical designs},
  journal    = {Adv. Math.},
  volume     = {52},
  pages      = {213--240},
  year       = {1984},
  doi        = {10.1016/0001-8708(84)90022-7}
}

@article{Hardin:1996aa,
  author     = {Hardin, R. H. and Sloane, N. J. A.},
  title      = {Mc{L}aren's improved snub cube and other new spherical designs in three dimensions},
  journal    = {Discrete Comput. Geom.},
  volume     = {15},
  pages      = {429--441},
  year       = {1996},
  doi        = {10.1007/bf02711518}
}

@misc{SloaneRepository,
  author     = {Hardin, R. H. and Sloane, N. J. A},
  title      = {Spherical Designs Library},
  year       = {},
  url        = {https://neilsloane.com/sphdesigns/},
  note       = {Accessed 26 July 2026}
}

@article{Sloan:2009aa,
  author     = {Sloan, I. H. and Womersley, R. S.},
  title      = {A variational characterisation of spherical designs},
  journal    = {J. Approx. Theory},
  volume     = {159},
  pages      = {308--318},
  year       = {2009},
  doi        = {10.1016/j.jat.2009.02.014}
}

@article{Bondarenko:2013aa,
  author     = {Bondarenko, Andriy and Radchenko, Danylo and Viazovska, Maryna},
  title      = {Optimal asymptotic bounds for spherical designs},
  journal    = {Ann. Math.},
  volume     = {178},
  pages      = {443--452},
  year       = {2013},
  doi        = {10.4007/annals.2013.178.2.2}
}

@book{Fano:1959ly,
  author     = {U. Fano and G. Racah},
  title      = {Irreducible Tensorial Sets},
  publisher  = {Academic Press},
  address    = {New York},
  year       = {1959},
  url        = {https://www.sciencedirect.com/bookseries/pure-and-applied-physics/vol/4/suppl/C}
}

@book{Blum:1981rb,
  author     = {Blum, Karl},
  title      = {Density Matrix Theory and Applications},
  publisher  = {Springer Berlin},
  address    = {Heidelberg},
  edition    = {3},
  year       = {2012},
  doi        = {10.1007/978-3-642-20561-3}
}

@book{Varshalovich:1988aa,
  author     = {Varshalovich, D A and Moskalev, A N and Khersonskii, V K},
  title      = {Quantum Theory of Angular Momentum},
  publisher  = {World Scientific},
  address    = {Singapore},
  year       = {1988},
  doi        = {10.1142/0270}
}

@book{Peres:2002oz,
  author     = {Peres, Asher},
  title      = {Quantum {T}heory: {C}oncepts and {M}ethods},
  publisher  = {Kluwer Academic Publishers},
  address    = {Dordrecht},
  year       = {1993},
  doi        = {10.1007/0-306-47120-5}
}

@article{Stratonovich:1957aa,
  author     = {Stratonovich, R. L.},
  title      = {On distributions in representation space},
  journal    = {Soviet Phys. JETP},
  volume     = {4},
  pages      = {891--898},
  year       = {1957},
  url        = {https://jetp.ras.ru/cgi-bin/dn/e_004_06_0891.pdf}
}

@article{Berezin:1975mw,
  author     = {F. A. Berezin},
  title      = {General concept of quantization},
  journal    = {Commun. Math. Phys.},
  volume     = {40},
  pages      = {153--174},
  year       = {1975},
  doi        = {10.1007/BF01609397}
}

@article{Varilly:1989aa,
  author     = {V{\'a}rilly, Joseph C and Gracia-Bond{\'\i}a, Jos{\'e} M.},
  title      = {The {Moyal} representation for spin},
  journal    = {Ann. Phys.},
  volume     = {190},
  pages      = {107--148},
  year       = {1989},
  doi        = {10.1016/0003-4916(89)90262-5}
}

@article{Dowling:1994sw,
  author     = {J. P. Dowling and G. S. Agarwal and W. P. Schleich},
  title      = {Wigner distribution of a general angular-momentum state: {A}pplications to a collection of two-level atoms.},
  journal    = {Phys. Rev. A},
  volume     = {49},
  pages      = {4101--4109},
  year       = {1994},
  doi        = {10.1103/PhysRevA.49.4101}
}

@article{Brif:1999aa,
  author     = {Brif, C. and Mann, A.},
  title      = {Phase-space formulation of quantum mechanics and quantum-state reconstruction for physical systems with {Lie}-group symmetries},
  journal    = {Phys. Rev. A},
  volume     = {59},
  pages      = {971--987},
  year       = {1999},
  doi        = {10.1103/physreva.59.971}
}

@incollection{GoethalsSeidel1979,
  author     = {Goethals, J. M. and Seidel, J. J.},
  title      = {Spherical Designs},
  booktitle  = {Relations Between Combinatorics and Other Parts of Mathematics},
  editor     = {Ray-Chaudhuri, D. K.},
  series     = {Proceedings of Symposia in Pure Mathematics},
  volume     = {34},
  pages      = {255--272},
  publisher  = {American Mathematical Society},
  address    = {Providence, RI},
  year       = {1979},
  url        = {https://bookstore.ams.org/pspum/34}
}

@unpublished{Rudzinski2026,
  author     = {V{\'a}zquez Mota, I. and Rudzi{\'n}ski, M. and Chryssomalakos, C. and {\.Z}yczkowski, K.},
  title      = {Coloring the quantum stars: diverse applications of a single {M}ajorana constellation},
  year       = {2026},
  note       = {In preparation}
}

@article{Goyenecheetal2015,
  author     = {Goyeneche, Dardo and Alsina, Daniel and Latorre, Jos\'e I. and Riera, Arnau and {\.Z}yczkowski, Karol},
  title      = {Absolutely maximally entangled states, combinatorial designs, and multiunitary matrices},
  journal    = {Phys. Rev. A},
  volume     = {92},
  pages      = {032316},
  year       = {2015},
  doi        = {10.1103/PhysRevA.92.032316}
}

@article{Goyenecheetal2018,
  author     = {Goyeneche, Dardo and Raissi, Zahra and Di Martino, Sara and {\.Z}yczkowski, Karol},
  title      = {Entanglement and quantum combinatorial designs},
  journal    = {Phys. Rev. A},
  volume     = {97},
  pages      = {062326},
  year       = {2018},
  doi        = {10.1103/PhysRevA.97.062326}
}

@article{Ruhl1982,
  author     = {R{\"u}hl, Werner},
  title      = {A boson representation for {SU(N)} lattice gauge theories},
  journal    = {Commun. Math. Phys.},
  volume     = {83},
  pages      = {455--468},
  year       = {1982},
  doi        = {10.1007/BF01208711}
}

@article{AmietWeigert1999,
  author  = {Amiet, Jean-Pierre and Weigert, Stefan},
  title   = {Coherent states and the reconstruction of pure spin states},
  journal = {J. Opt. B: Quantum Semiclassical Opt.},
  volume  = {1},
  pages   = {L5--L8},
  year    = {1999},
  doi     = {10.1088/1464-4266/1/5/101}
}

@article{Weigert1999,
  author        = {Weigert, Stefan},
  title         = {A discrete phase-space calculus for quantum spins based on a reconstruction method using coherent states},
  journal       = {Acta Phys. Slovaca},
  volume        = {49},
  pages         = {613--620},
  year          = {1999},
  eprint        = {quant-ph/9904095},
  archivePrefix = {arXiv},
  primaryClass  = {quant-ph},
  url           = {https://arxiv.org/abs/quant-ph/9904095}
}

@article{AmietWeigert2000,
  author  = {Amiet, Jean-Pierre and Weigert, Stefan},
  title   = {Discrete {$Q$}- and {$P$}-symbols for spin {$s$}},
  journal = {J. Opt. B: Quantum Semiclassical Opt.},
  volume  = {2},
  pages   = {118--121},
  year    = {2000},
  doi     = {10.1088/1464-4266/2/2/309}
}

@article{Weigert2004,
  author  = {Weigert, Stefan},
  title   = {Expanding {H}ermitian operators in a basis of projectors on coherent spin states},
  journal = {J. Opt. B: Quantum Semiclassical Opt.},
  volume  = {6},
  pages   = {489--490},
  year    = {2004},
  doi     = {10.1088/1464-4266/6/12/001}
}

@article{IblisdirRoland2006,
  author  = {Iblisdir, Sofyan and Roland, J{\'e}r{\'e}mie},
  title   = {Optimal finite measurements and {Gauss} quadratures},
  journal = {Phys. Lett. A},
  volume  = {358},
  pages   = {368--372},
  year    = {2006},
  doi     = {10.1016/j.physleta.2006.05.045}
}

\end{document}